\def\EE{{\mathbb{E}}}\def\PP{{\mathbb{P}}}
\def\setR{\mathbb{R}}
\def\bR{\mathbf{R}}
\def\bW{\mathbf{W}}
\newcommand{\fall}{\,\forall\,}
\newcommand{\n}{\nonumber}
\newtheorem{theorem}{Theorem}
\newtheorem{lemma}[theorem]{Lemma}
\newtheorem{definition}{Definition}
\date{}
\title{Detecting Sponsored Recommendations}
\author[1]{Subhashini Krishnasamy}
\author[1]{Rajat Sen}
\author[2]{Sewoong Oh}
\author[1]{Sanjay Shakkottai}
\affil[1]{The University of Texas at Austin}
\affil[2]{University of Illinois at Urbana-Champaign}
\begin{document}
%

\maketitle

\begin{abstract}
  With a vast number of items, web-pages, and news to choose from,
  online services and the customers both benefit tremendously from
  personalized recommender systems.  Such systems however provide
  great opportunities for targeted advertisements, by displaying ads
  alongside genuine recommendations.  We consider a {\em biased}
  recommendation system where such ads are displayed without any tags
  (disguised as genuine recommendations), rendering them
  indistinguishable to a single user.  We ask whether
  it is possible for a small subset of collaborating users to detect
  such a bias.  We propose an algorithm that can detect such a bias
  through statistical analysis on the collaborating users' feedback.
  The algorithm requires only binary information indicating whether a
  user was satisfied with each of the recommended item or not.  This
  makes the algorithm widely appealing to real world issues such as
  identification of search engine bias and pharmaceutical lobbying.
  We prove that the proposed algorithm detects the bias with high
  probability for a broad class of recommendation systems when
  sufficient number of users provide feedback on sufficient number of
  recommendations.  We provide extensive simulations with real data sets and
  practical recommender systems, which confirm the trade offs in the
  theoretical guarantees.

\end{abstract}

%
%

\section{Introduction}
\label{sec:intro}


The growth of online services has provided a vast variety of choices
to users. This choice exists today in multiple domains including
e-commerce with a variety of products, and online entertainment
(NetFlix, Pandora). With users having to choose from an overwhelming
set of items, recommender systems have become indispensable in easing
the information overload and search complexity. Recommender systems
are not restricted to retail businesses. A search engine like Google
can be viewed as a recommendation engine that helps users find
relevant information by ranking the search results according to their
search criteria, history and other personal information. Social
networking sites like Twitter and Facebook display Tweets and News
Feed based on users' past behavior and their connections to other
users. News portals like Yahoo!\xspace News also present personalized content
to online news readers.

Personalized recommender systems serve as an attractive platform for
advertisers to reach their targeted consumers. It is now customary to
see ads alongside other genuine recommendations in many of the
websites that provide recommendation services. One can distinguish
these ads from genuine recommendations, for example, by the location
of their placement or by their special tags. But recommendation
engines are not legally obliged to facilitate such distinction and
could possibly serve these ads mixed with genuine recommendations in a
manner that renders them indistinguishable to users. Such a {\em
  biased recommender system} can have far reaching consequences,
including user dissatisfaction with the recommendations
\cite{beel-etal13sponsored-vs-organic}. A recent survey by Facebook
shows that users find sponsored ads mixed with genuine posts in their
News Feed more annoying than the explicit, well-separated ads
\cite{albergotti14facebook-newsfeed}. Social and political
consequences of bias in the context of media and online content have
also been studied \cite{entman07framing-bias, introna00shaping-the-web, epstein14google-democracy}.

Modern recommender systems, in general, consist of two components:
{\em (i)} learn individual preferences from user feedback, and {\em
  (ii)} recommend items to users based on the estimated preferences.
This combination of learning and recommending is bound to be noisy
(the learning phase will {\em explore} individual preferences
typically by presenting ``random'' recommendations), and several
recommendations to users will likely be ineffective. Critically, both
noise and bias manifest as bad recommendations to users. However,
noise is benign and is a consequence of learning, while bias is
systematic and is to be deprecated. Thus, a basic question of interest
to the users of such systems is whether or not such a biased
recommender system can be detected. This is a broad question, and
detecting bias in its most general sense is out of the scope of this
paper. We focus on a detecting a specific type of bias where
recommendation engines {\em systematically favor a few items over
  other better or at least equally good items, contrary to what an
  objective or unbiased system would do.}

It should be noted that, with most service providers being
non-transparent about their recommendation strategies, one cannot hope
to know the exact statistical profile of the recommendation engine
\emph{a priori}. Therefore, the key is to identify the primary
features that can be used to differentiate between the two types
without any \emph{a priori} knowledge about the particulars of the
recommendation strategy. One could, for instance, consider the average
rating or the average number of ineffective recommendations as the
performance measure and make a decision based on a threshold
parameter. However, as we also demonstrate through simulations, such a
basic algorithm based solely on average performance cannot distinguish
between deliberate systematic bias and innocuous random
  errors. This brings us to the key question: {\em Can we
    develop a better method to expose a biased recommender system?}

\subsection{Contributions of this paper}
\label{sec:contrib}

We say a recommendation engine is \emph{biased}, if it systematically
favors a small set of items over other items in the database
irrespective of users' preferences. On the other hand, we say that a
recommendation engine is \emph{objective}, if it satisfies a simple
monotonic property in its recommendations to users -- better suited
items are given higher priority (in a statistical sense). The primary
goal of this paper is then to develop algorithms to answer the
following question: \emph{Can a meaningful distinction be drawn
  between objective and  biased recommendation engines?}


\noindent \textbf{BiAD Algorithm:} We propose an anomaly detection
algorithm that we call \emph{Bi}nary feedback \emph{A}nomaly
\emph{D}etector (\emph{BiAD}), which uses a statistical approach to
identify a \emph{biased} recommendation engine. Under appropriate
conditions on the size of the ad-pool, the aggressiveness of the
biased recommender system, and the number of users/samples, we show
that BiAD correctly (with high probability) distinguishes between
objective and biased recommendation engines.

The algorithm leverages user collaboration, and is based on the
observation that a \emph{biased} system is typically characterized
by the occurrence of a \emph{large
  number of ineffective recommendations in a small set of items}. On
the contrary, giving higher priority to more effective items, as in an
\emph{objective} recommender system, precludes such concentration in a
small set. Notably, since the users are not aware of the set of items,
the BiAD algorithm is adaptive -- as the recommender system learns
users, the users ``learn'' the recommender system. Further, our
algorithm relies only on binary feedback on the effectiveness of the
recommendation. Finally, the BiAD algorithm also works for a large
class of recommender systems since our model does not place any
constraints on the recommendation engine other than mild statistical
conditions. We finally present extensive simulation results that cover
various types of recommender systems and data sets to illustrate the
wide applicability of the algorithm.

\subsection{Related Work}

Following the recent successes of the targeted advertising services,
there have been several empirical studies that investigate the effects
of displaying sponsored content alongside organic content
\cite{beel-etal13sponsored-vs-organic, ghose-yang09search-engine-advertising, tucker12social-advertising}.
  There have also been attempts
to explain such effects through theoretical models
\cite{yang-ghose10organic-sponsored-relation, bergemann-bonatti11targeted-advertising}. In addition, several
researchers have worked on designing systems and algorithms from the
content provider's perspective for revenue maximization through
efficient auction of the ad-space 
\cite{lahaie-etal07sponsored-search-auctions} and from the
advertiser's perspective for effectively reaching the target audience
\cite{rusmevichientong-williamson06algorithm-search-advertising, turpin-katz08system-social-advertising}.
It is empirically shown in \cite{beel-etal13sponsored-vs-organic} that
customers are less likely to select recommendations which are tagged
as ``advertisement'' or ``sponsored'', motivating the advertisers to remove such tags.

Prior work on anomaly detection in recommender systems exists from the
perspective of a recommendation engine as a victim of false
user-profile injections \cite{burke-etal06attack-detection-recosys, mobasher-etal07attack-recosys}. To the best of our knowledge, ours
is the first work that considers the problem from the users'
perspective and proposes a mechanism for detection of bias in
recommendation engines.

\section{System Model}
\label{sec:sys-model}
In this section, we describe our assumptions about the structural properties of \emph{objective} and \emph{biased} recommender systems by the means of a probabilistic model. This model does not include any particulars about the working of the recommendation engine and therefore typifies a broad class of recommender systems. Before we proceed to describe the model in detail, the salient features of this model are listed below:
\begin{itemize}
\item An \emph{objective} recommendation engine has a fairly good estimate of the user preferences.
\item An \emph{objective} recommendation engine follows the monotonic property -- higher preference to higher ranked items.
\item A \emph{biased} recommendation engine systematically gives preference to a small set of items irrespective of users' tastes.
\end{itemize}

\bigskip\noindent{\bf Notation:}  Our notation $O, \Omega, \Theta, o, \omega$ to describe the asymptotics of various parameters with increasing size of the database (total number of items in the database) is according to the standard Landau notation. We say that an event occurs with high probability if the probability of the event tends to $1$ as the size of the database goes to infinity. We use $\mathds{1}\left\{\cdot\right\}$ to represent the indicator function, i.e.,
\begin{align*}
 \mathds{1}\left\{\mathsf{E}\right\} := \begin{cases} 1 & \text{if event $\mathsf{E}$ occurs}, \\
								0 & \text{otherwise}. \end{cases}
\end{align*}
Equality and inequality between random variables always refer to almost sure (with probability $1$) conditions unless otherwise specified. For example, if $X$ and $Y$ are two random variables, then $X=Y$ implies $X=Y \; a.s.$ For any given matrix, $\bR,$ the $u^{th}$ row of $\bR$ is represented by $\bR_u.$

\subsection{User-Item Database}
The recommendation engine recommends products to users from a large database of $m$ items indexed from $1$ to $m.$ A user's opinion about an item is represented by a numerical value that we call the user's \emph{rating} of that item. It should be noted that these ratings are only an implicit representation of true opinions of the users -- higher the rating, better suited is the item for the user. We denote the user-item rating matrix for the entire database by $\bR,$ where rows indicate users and columns indicate items. We introduce a parameter called the \emph{efficacy threshold}, denoted by $\eta$ which is used to represent opinions on a binary scale. We assume that a user is satisfied with a recommendation if the rating of the recommended item is greater than or equal to $\eta.$ We refer to such a recommendation and item as being \emph{effective} for that user. 
\begin{definition}[Effective \& Ineffective]
\label{defn:effective}
 An item $i$ is effective for a user $u$ if the rating of that item by the user, $R_{ui}$ is at least $\eta.$ Similarly, a recommendation is said to be effective for a user if the recommended item is effective. An item or recommendation that is not effective is said to be ineffective.
\end{definition}
 Let $f_u(\eta, [m])$ denote the number of items in the database $[m]$ whose rating is greater than or equal to $\eta$ for user $u.$ In other words, it is the number of effective items in the database for user $u.$ 

Let us define the function $F: \setR \times \setR^m \rightarrow \setR$ as follows:
\begin{align*}
F(r, \bR_u) & := \left\vert\left\{i: R_{ui} \geq r\right\}\right\vert,
\end{align*}
where $R_{ui}$ is the $i^{th}$ element of the $m$-length vector $\bR_u.$ This function is used to find the number of items whose rating exceeds value $r$ for any player $u$ if the ratings of all the items in the database for player $u$ is given by $\bR_u.$ For example, if $\bR_u$ is the row corresponding to player $u$ in the rating matrix, $\bR,$ then $F(\eta, \bR_u)$ is equal to $f_u(\eta, [m]),$ the number of effective items for user $u.$ Similarly, $F(R_{ui}, \bR_u)$ gives the rank of item $i$ for user $u.$ Also note that for any given $\bR_u,$ $F(r, \bR_u)$ is a non-increasing function of $r.$

\subsection{Recommendation Engine}
\label{subsec:reco-engine}
We next describe the behavior of a recommendation engine using a probabilistic model.
Let $\mathds{1}_{ui}(t)$ indicate whether item $i$ has been recommended to user $u$ at time $t,$ i.e.,
\begin{align*}
\mathds{1}_{ui}(t) & := \begin{cases} 1 & \text{if item $i$ is recommended to user $u$ at time $t$}, \\
								0 & \text{otherwise}. \end{cases}
\end{align*}
We make the following assumption about any recommender system: An item that has been recommended to a user once is not recommended to the same user again, i.e., for any user, $u$ and item, $i,$ $\sum_{t=1}^{\infty} \mathds{1}_{ui}(t) \leq 1.$

\subsubsection{Objective Recommendation Engine}
\label{subsubsec:objective}
An \emph{objective} recommendation engine is considered to consist of two components - one is the \emph{learning} strategy which estimates the user-item rating matrix by the means of available feedback from users, and another is the \emph{recommendation} strategy which generates recommendations based on the estimated user preferences. Our model does not specify the details of the learning strategy except requiring that the output of the strategy, that is the estimate of the user-item matrix, be close to the original rating matrix, $\bR.$ Therefore, this model could be applied to a wide class of recommendation engines which estimate users' preferences fairly well. Let the estimate of the rating matrix at time $t$ be denoted by $\hat{\bR}(t) = \left[\hat{R}_{ui}(t)\right].$ This estimate is modeled as the sum of the original rating matrix and an additive noise matrix whose elements are independent across users, items and time. This can be written as $\hat{\bR}(t) = \bR + \boldsymbol{\epsilon}(t),$ where $\boldsymbol{\epsilon}(t) = \left[\epsilon_{ui}(t)\right]$ is the noise matrix and $\epsilon_{ui}(t)$ is independent of $\epsilon_{u'i'}(t')$ for all $u, u', i, i', t, t'.$

The recommendation strategy uses the estimated user-item rating matrix $\hat{\bR}(t)$ to make recommendations at time $t.$ The following model characterizes the behavior of an \emph{objective} recommendation strategy:
\begin{enumerate}
\item \label{item:stoc-mx} Recommendations are made based on a user-item weight matrix, denoted by $\bW(t) = \left[W_{ui}(t)\right].$ This is a stochastic matrix (rows sum to one), which is updated based on the current estimate of the rating matrix,  $\hat{\bR}(t)$.
\item Given the weight matrix, a user is given a recommendation by choosing an item randomly, independent of everything else, with weights given by the row corresponding to the user in the user-item weight matrix.
\item \label{item:monotonic} At any time $t$, the weight matrix $\bW(t)$ satisfies the following \emph{monotonic} property:
 if $i$ and $j$ are two items that have not been shown to user $u$ and the ratings are such that $\hat{R}_{ui}(t) \geq \hat{R}_{uj}(t),$ then the weights satisfy $W_{ui}(t) \geq W_{uj}(t).$
\end{enumerate}

\subsubsection{Biased Recommendation Engine}
\label{subsubsec:biased}
A \emph{biased} recommendation engine marks a small set of items, $\mathcal{A}$ ($\subseteq [m]$) from the item database as \emph{ads}. To make a recommendation to a user, with probability $\gamma,$ independent of everything else, it chooses an item that has not been shown from the ad-pool, $\mathcal{A}.$ And with probability $1 - \gamma,$ it can follow any recommendation algorithm (for example, an \emph{objective} recommendation algorithm). We refer to $\gamma$ as the \emph{bias probability}. Note that the strategy for showing ad items is unspecified except that no item is shown to a user twice. In particular, the engine may even customize its ad recommendations according to users' tastes. As in the case of the complete database, let $f_u(\eta, \mathcal{A})$ denote the number of effective ads in the ad-pool, $\mathcal{A}$ for user, $u.$

\subsection{Discussion of Assumptions}
\label{subsec:assumptions}
Some of the assumptions in the recommender system model above are present only for ease of analysis. We discuss below how they can be relaxed in practical settings. 
\begin{enumerate}
\item It is assumed that, in any recommendation engine, an item once recommended to a user is not recommended to the same user again. This condition is required only to ensure that there are no repeated recommendations of sponsored advertisements that might be effective. Indeed, if all sponsored ad recommendations are effective, it would not be possible to distinguish them from genuine recommendations. This assumption can therefore be relaxed to require sufficient number of ineffective ad recommendations in a \emph{biased} recommender system.
\item The noise in estimation of the user-item rating matrix is assumed to be additive i.i.d.\ noise. This can be replaced by a more general noise model in which the elements of the estimated user-item matrix are independent across users, items and time. The independence assumption is used to model arbitrary errors which are unlikely to skew the estimated matrix in such a way as to give high preference to a small number of ineffective items uniformly across a large subset of users. 
\item We assume that a \emph{biased} recommendation engine decides to show sponsored ads with probability $\gamma$ (bias probability) independent of everything else. This assumption, again, is used only for ease of exposition. It is sufficient to have arbitrary $\gamma$ fraction of the recommendations from the ad-pool, not necessarily chosen at random.
\end{enumerate}

\section{Anomaly Detection Algorithm and Theoretical Results}
\label{sec:theory-results}

In this section, we describe the algorithm for detecting anomalous systems and provide analysis of Type $I$ and Type $II$ errors as in binary hypothesis testing. 

\subsection{Anomaly Detection System}
\label{subsec:anamoly}
The problem is to design a test to detect if a recommendation engine is \emph{biased}. In other words, the test has to decide between the following two hypotheses:
\begin{itemize}
\item $H_1:$ ``The recommendation engine is \emph{biased},'' 
 and 
\item $H_0:$ ``The recommendation engine is not \emph{biased}.'' 
\end{itemize}
It is similar to a hypothesis testing problem except that the statistical distribution for the two hypotheses are not well defined. The only \emph{a priori} knowledge that is assumed is the structure of a \emph{biased} recommendation engine as specified in Section~\ref{subsec:reco-engine}. But the specifics of various parameters in the recommendation engine, such as bias probability $\gamma$ and the ad-pool $\mathcal{A}$ is unknown. As in traditional hypothesis testing problems, we make use of multiple data points obtained from many users who constitute the \emph{anomaly detection system}. The anomaly detection system consists of a set of $n$ \emph{players} which is a subset of the user database in the recommendation system. These players can give accurate binary feedback (effective or ineffective) on the items recommended to them. Without loss of generality, we denote these players as users indexed from $1$ to $n$ in the user database.
\subsection{Algorithm}
\label{subsec:alg}
We now describe an algorithm called \emph{Bi}nary feedback \emph{A}nomaly \emph{D}etector (\emph{BiAD}), that uses the recommendations made to the players and their feedback to decide between one of the two hypotheses. In every \emph{round} of recommendation, each player is recommended an item by the recommendation engine. In round $t$, the algorithm uses the feedback from the players and computes for each item, the total number of players until that round who have been recommended that item and found the item ineffective. This number is denoted by $B_i(t)$ for item $i.$ If the sum of the largest $\hat{A}(t)$ of these numbers among all the items is greater than or equal to a threshold $T(t),$ the recommendation engine is declared to be \emph{biased}. Otherwise, the same procedure is repeated in the next round. If the algorithm does not declare the engine to be \emph{biased} in $Q(m)$ rounds, then the hypothesis that the engine is \emph{biased} is rejected. Here, $\hat{A}(t),$ $T(t)$ and $Q(m)$ in the algorithm are appropriately designed parameters (given by Equations~\ref{eqn:hatA}-\ref{eqn:sum-prob}). The pseudocode for this algorithm is shown in Algorithm~\ref{alg:biad}.
\begin{algorithm}
  \caption{\emph{Bi}nary feedback \emph{A}nomaly \emph{D}etector (\emph{BiAD})}
  \begin{algorithmic}
    \State Initialize $t=1$ (round $1$).
    \While{$t \leq Q(m)$}
    \State Compute $B_i(t) = $ number of players who have rated item $i$ ineffective upto round $t$ for all $i \in [m].$ 
    \State Compute $S(t) = $ sum of the largest $\hat{A}(t)$ among $\left\{B_i(t)\right\}_{i=1}^{m}.$
    \If{$S(t) \geq T(t)$}
    \State Stop and accept $H_1.$
    \Else
    \State $t \gets t+1.$
    \EndIf
    \EndWhile
    \State Stop and reject $H_1.$
    
  \end{algorithmic}
    \label{alg:biad}
\end{algorithm}

As opposed to the basic average test, this algorithm searches for concentration of large number of ineffective items in a small set. Since the number of potential advertisements is unknown, this algorithm makes decisions in real-time as it gets feedback from the players. Larger the size of the ad-pool, larger is the number of feedback samples required to detect a \emph{biased} engine. (The trade off between various parameters is discussed in detail in Section~\ref{sec:discussion}.) Therefore, the algorithm increases the size of the search set with progressing rounds of recommendation. Also, note that the algorithm requires only binary feedback from the players -- whether the recommendations are effective or ineffective, which explains the name of the algorithm.

 The following theorem gives sufficient conditions for good performance of the algorithm. Unlike in general hypothesis testing problems, we define Type $I$ error, which corresponds to false positives, only for \emph{objective} systems. On the other hand, Type $II$ error is used to refer to missed detection in the case of a \emph{biased} system. We do not give any guarantees for the class of recommendation engines that are neither \emph{objective} nor \emph{biased}.
\begin{theorem}
\label{THM:MAIN}
Let the parameters in the detection algorithm, BiAD satisfy the following equations:
\begin{align}
\hat{A}(t) & = t,	\label{eqn:hatA}\\
T(t) & =  \exp \left(1+W\left(\frac{\hat{\beta}(t)}{e}\right)\right)\hat{p}(t),	\label{eqn:T}
\end{align}
where $W(\cdot)$\footnote{For any $z \in \setR,$ $W(z)e^{W(z)} = z.$} represents the Lambert-W or product log function, and 
\begin{align}
\hat{\beta}(t) & = \frac{\left( \hat{A}(t)+c \right)\log m}{\hat{p}(t)} - 1,	\label{eqn:hatbeta}	\\
\hat{p}(t) & = \exp \left(1+W\left(\frac{\beta(t)}{e}\right)\right)p(t),	\label{eqn:hatp}	\\
\beta(t) & = \frac{\left( \hat{A}(t)+c \right) \log m}{p(t)} - 1,	\label{eqn:beta}\\
p(t) & = \max_{\left\{\hat{\mathcal{A}} \subseteq [m]: \left\vert \hat{\mathcal{A}}\right\vert = \hat{A}(t) \right\}} \left\{ \sum_{u=1}^n \sum_{l=1}^t \EE\left[ P^{\hat{\mathcal{A}}}_u(l) \right] \right\},	\label{eqn:p}\\
 P^{\hat{\mathcal{A}}}_u(l) & = \sum_{i \in \hat{\mathcal{A}}} \frac{\mathds{1} \left\{ R_{ui} < \eta \right\}}{F\left(\hat{R}_{ui}(l), \hat{\bR}_u(l)\right) - l + 1},	\label{eqn:sum-prob}
\end{align}

with $c=1/2.$ Then BiAD gives the following guarantees on the error probabilities:
\begin{enumerate}[label=(\Roman{*})]
\item \label{item:type-I} Type $I$ Error:\\ If the recommendation engine is objective, the probability that BiAD declares it to be anomalous is $O(\frac{Q(m)}{\sqrt{m}}).$
\item \label{item:type-II} Type $II$ Error:\\ If the recommendation engine is anomalous with an ad-pool of size $A,$ and if 
\begin{enumerate}
\item \label{item:ad-size} the number of ads, $A \leq Q(m),$ 
\item \label{item:bias-prob} the fraction of recommendations that are ads, i.e., the bias probability $\gamma = \omega\left( \frac{\log m}{n} \right), \, \omega(\frac{p(A)}{nA}),$  and
\item \label{item:num-eff-ads} $\sum_{u=1}^n f_u(\eta, \mathcal{A}) = o(\gamma nA),$ where $f_u(\eta, \mathcal{A})$ is the number of effective ads for user $u,$
\end{enumerate} 
 then the probability that BiAD does not declare the system as anomalous within $A$ rounds is $e^{-\Omega(\gamma n)}.$
\end{enumerate}
\end{theorem}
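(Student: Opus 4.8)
The plan is to read $S(t)$ as $\max_{|\hat{\mathcal A}|=\hat A(t)}\sum_{i\in\hat{\mathcal A}}B_i(t)$ and to exploit that $T(t)$ is calibrated through the Lambert-$W$ inflations so that an objective engine keeps this maximum below $T(t)$, whereas a biased engine makes the ad-pool $\mathcal A$ alone push $S(A)$ above $T(A)$. The structural workhorse is the monotonic property. I would first show that, in an objective engine and given the noise, the probability that an ineffective item $i$ (with $R_{ui}<\eta$) is recommended to player $u$ at round $l$ is at most $1/(F(\hat R_{ui}(l),\hat\bR_u(l))-l+1)$: at least $F(\hat R_{ui}(l),\hat\bR_u(l))$ items have estimated rating at least $\hat R_{ui}(l)$, at most $l-1$ have already been shown, and by monotonicity each remaining one carries weight at least $W_{ui}(l)$, so $W_{ui}(l)$ times their number is at most the total weight $1$. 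Summing over the ineffective items of $\hat{\mathcal A}$ identifies $P_u^{\hat{\mathcal A}}(l)$ as a per-round upper bound, whence $\EE[\sum_{i\in\hat{\mathcal A}}B_i(t)]\le p(t)$ for every fixed $\hat{\mathcal A}$ of size $\hat A(t)=t$.

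For Type I, fix $\hat{\mathcal A}$ and split the deviation into two Poisson/Chernoff tails, each inverted through Lambert-$W$. Because $P_u^{\hat{\mathcal A}}(l)$ depends only on the round-$l$ noise (the history-dependent count of already-shown higher-rated items is absorbed into the deterministic $-l+1$), the quantity $\Pi:=\sum_{u,l}P_u^{\hat{\mathcal A}}(l)$ is a sum of independent terms with $\EE[\Pi]\le p(t)$; solving $a\ln a-a+1=\beta(t)+1$ gives $a=\exp(1+W(\beta(t)/e))$, so $\hat p(t)$ is exactly the level at which $\PP(\Pi>\hat p(t))\le m^{-(\hat A(t)+c)}$. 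On $\{\Pi\le\hat p(t)\}$ I would write $\sum_{i\in\hat{\mathcal A}}B_i(t)$ as a sum of $\{0,1\}$ per-round increments with predictable means bounded by $P_u^{\hat{\mathcal A}}(l)$ and apply a martingale Chernoff bound; the identical inversion with mean $\hat p(t)$ gives $\PP(\sum_{i\in\hat{\mathcal A}}B_i(t)>T(t))\le m^{-(\hat A(t)+c)}$. A union bound over the $\binom{m}{t}\le m^{\hat A(t)}$ subsets yields $\PP(S(t)\ge T(t))=O(m^{-c})=O(m^{-1/2})$, and a union over the $Q(m)$ rounds gives $O(Q(m)/\sqrt m)$.

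For Type II, the ad-indicators are independent Bernoulli($\gamma$) across the $nA$ player-round pairs, so the number of ad recommendations by round $A$ concentrates around $\gamma nA$ with multiplicative-Chernoff deviation $e^{-\Omega(\gamma nA)}\le e^{-\Omega(\gamma n)}$. Since at most $\sum_u f_u(\eta,\mathcal A)=o(\gamma nA)$ of these are effective by (c), the ineffective ad recommendations---precisely $\sum_{i\in\mathcal A}B_i(A)$---number at least $(1-o(1))\gamma nA$ except with probability $e^{-\Omega(\gamma n)}$. As $S(A)$ is the top-$A$ sum it dominates $\sum_{i\in\mathcal A}B_i(A)$, so it remains to verify $T(A)\ll\gamma nA$: the Lambert-$W$ factors give $T(A)=O(\max(p(A),A\log m))$ up to lower-order factors, which is beaten by $\gamma nA$ because (b) supplies $\gamma nA=\omega(p(A))$ and $\gamma n=\omega(\log m)$. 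Hence $S(A)\ge T(A)$ and detection occurs by round $A$.

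The hardest part is the concentration calibration under dependence. Within one player's history the recommendation indicators are dependent (no item recurs and the weights react to past feedback) and the bounds $P_u^{\hat{\mathcal A}}(l)$ may exceed $1$, so justifying the Poisson-type tail with the exact Lambert-$W$ constants is delicate; the clean separation above---an independent sum for the noise ($p\to\hat p$) and a $\{0,1\}$-increment martingale for the recommendations ($\hat p\to T$)---is what makes the two nested inversions deliver precisely the $m^{-(\hat A(t)+c)}$ per-subset budget that survives the $\binom{m}{\hat A(t)}$-fold union. For Type II the only comparable care is estimating $T(A)$ through its Lambert-$W$ factors and dismissing the negligible event that a player exhausts the ad-pool.
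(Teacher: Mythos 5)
Your proposal follows essentially the same route as the paper's proof: the same monotonicity-plus-stochasticity bound $W_{ui}(l)\le 1/(F(\hat R_{ui}(l),\hat\bR_u(l))-l+1)$, the same two-stage Chernoff argument with nested Lambert-$W$ inversions (noise fluctuation $p\to\hat p$, then recommendation fluctuation $\hat p\to T$, each budgeted at $m^{-(\hat A(t)+c)}$ to survive the $\binom{m}{\hat A(t)}$ and $Q(m)$ union bounds), and the same Type II argument (lower-tail Chernoff on the Bernoulli($\gamma$) ad indicators, subtracting the $o(\gamma nA)$ effective ads, and the case analysis showing $T(A)=O(\max(p(A),A\log m))\ll\gamma nA$). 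Your martingale phrasing for the $\{0,1\}$ increments is a minor (and if anything slightly more careful) variant of the paper's conditioning on $\{\bW(l'),\hat\bR(l')\}_{l'=1}^t$, and your flagged concern about $P^{\hat{\mathcal A}}_u(l)$ exceeding $1$ is a real looseness present in the paper's own application of its Bernoulli Chernoff lemma.
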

The proof of this theorem is presented in Section~\ref{sec:proof}.

\section{Discussion}
\label{sec:discussion}
In this section, we discuss 
how the error probabilities depend on the parameters of the problem. 

\subsection{Choice of Threshold}
\label{subsec:thresh}
Note that computation of the threshold function, $T(t)$ as specified in Theorem~\ref{THM:MAIN} (given by Equation~\eqref{eqn:T}) requires knowledge of the noise statistics and also the players' opinions about all the items in the database. More precisely, since $\hat{\bR}(t) = \bR + \boldsymbol{\epsilon}(t),$ computation of $\EE\left[ P^{\hat{\mathcal{A}}}_u(l) \right]$ (see Equation~\eqref{eqn:sum-prob}) requires knowledge of $\bR_u$ and also the distribution of estimation noise, $\boldsymbol{\epsilon}_u(l).$ The noise statistics reflect the accuracy of the learning strategy of the recommendation engine, and it is possible that these statistics are unknown or cannot be estimated. Moreover, it might also be difficult to obtain the players' opinions about all the items in the database. To overcome this difficulty, a practical implementation of the algorithm could use an approximation of the unknown quantity. We now propose one way to compute such an approximation. Note that
\begin{align*}
\EE\left[ P^{\hat{\mathcal{A}}}_u(l) \right] & = \EE\left[ \sum_{i \in \hat{\mathcal{A}}} \frac{\mathds{1} \left\{ R_{ui} < \eta \right\}}{F\left(\hat{R}_{ui}(l), \hat{\bR}_u(l)\right) - l + 1} \right]	\\
& \leq \EE\left[ \sum_{i \in \hat{\mathcal{A}}} \frac{1}{F\Big(\eta+\epsilon_{ui}(l), \bR_u(l)+ \boldsymbol{\epsilon}_u(l)\Big) - l + 1} \right],	\\
\end{align*}
where the inequality follows since $F\Big(r, \bR_u(l)+ \boldsymbol{\epsilon}_u(l)\Big)$ is a non-increasing function of $r.$ We assume that the estimates of the ratings are not skewed in one direction, and therefore the noise has zero mean. Since the noise statistics are unknown, we could approximate the right hand side of the above inequality by substituting the noise term with its mean. With this approximation, the right hand side of the inequality can be substituted with
\begin{align}
\label{eqn:approx-thresh}
\sum_{i \in \hat{\mathcal{A}}} \frac{1}{F\Big(\eta, \bR_u(l)\Big) - l + 1} & = \frac{\hat{A}(t)}{f_u(\eta, [m]) - l + 1},
\end{align}
where $f_u(\eta, [m])$ is the total number of effective items in the database for user $u.$ Depending on the application, it might be relatively easy to estimate this number or at least estimate a lower bound for this number. As an example, one could roughly estimate that for every user there are $\sqrt{m}$ effective items among the $m$ items in the database. We observe in our simulations that a rough estimate of $f_u(\eta, [m])$ is sufficient to obtain good results. 

Note that over-estimation (under-estimation) of $T(t)$ decreases the probability of Type $I$ (Type $II$) error and increases the probability of Type $II$ (Type $I$) error. In other words, the higher the value of $T(t),$ the lower is the probability of Type $I$ error and the higher is the probability of Type $II$ error. Therefore, the risk associated with false positives and missed detection could serve as a guideline for the choice of the threshold function. In our simulations (Section~\ref{sec:num-results}), we propose a practical threshold function that gives a good balance between the two error probabilities for most scenarios.

\subsection{Effect of Parameters on Performance}
Theorem~\ref{THM:MAIN} gives guarantees on the asymptotic performance of \emph{BiAD} as the size of the database grows large. These guarantees depend on various parameters in the algorithm as well as the recommendation engine. From the analytic bounds derived in Theorem~\ref{THM:MAIN}, we analyze in this section the trade off between these parameters to understand the conditions under which the algorithm shows good performance. We see that the theoretical results support our intuitive understanding about the conditions under which a \emph{biased} system can be distinguished from an \emph{objective} system. These results are also corroborated by our simulation results described in Section~\ref{sec:num-results}.

In Section~\ref{subsec:thresh}, we consider the effect of the choice of the threshold parameter on the error probabilities. We now discuss the effect of other parameters.

\bigskip\noindent{\bf Number of Rounds in the Test and Size of the Ad-Pool}.
It is seen (from Result~\ref{item:type-I}) that the upper bound on the probability of Type $I$ error increases with increasing number of rounds. 
This is expected, since it gives more chances to falsely declare a system {\em biased}. 
For Type $I$ error to go to zero as the size of the database goes to infinity, it is sufficient if $Q(m) = o(\sqrt{m}).$

Guarantees for detection of a \emph{biased} engine (Result~\ref{item:type-II}) are dependent on various parameters. One of the conditions is that the ad-pool is not very large (Condition~\ref{item:ad-size}). Specifically, it is sufficient if the size of the ad-pool, $A$ is at most the maximum number of rounds of recommendations, $Q(m)$. Therefore, increasing $Q(m)$ (the number of rounds of testing) enables detection of larger ad-pools but also increases the probability of Type $I$ error. Intuitively, a small ad-pool conforms with our definition of a \emph{biased} recommendation engine as one that favors a few items over many others and therefore facilitates easier detection.

\bigskip\noindent{\bf Number of Effective Ads}.
For correct detection of a \emph{biased} engine, it is also required that the average number of effective ads (averaged over all players) is not very large (Condition~\ref{item:num-eff-ads}). A large number of effective ads enables the recommendation system to customize ads according to users' tastes and is contradictory to our interpretation of a \emph{biased} system which recommends ads that do not match with users' preferences.

\bigskip\noindent{\bf Number of Players}. 
The dependence on the number of players $n$ is seen in two respects -- it determines the minimum bias probability at which detection is guaranteed (Condition~\ref{item:bias-prob}) and also the probability of Type $II$ error. Both these results show that a large number of players improves the prospect of correct identification which can explained by the fact that a large sample size supports better statistical analysis.

\bigskip\noindent{\bf Number of Effective Items}.
 The minimum bias probability at which detection is ensured is also determined by the average number of effective items in the entire database. This can be seen from the term $\frac{p(A)}{nA}$ in (Condition~\ref{item:bias-prob}). The no estimation noise case ($\epsilon_{ui}(t) = 0$ for all $u, i, t$) is useful in understanding the term $\frac{p(A)}{nA}.$ When there is no noise, $\frac{p(A)}{nA} = O\left(  \frac{1}{nA}\sum_{u=1}^n \sum_{l=1}^A \frac{A}{f_u(\eta,[m])-l+1}\right).$ 
 Therefore, $\frac{p(A)}{nA}$ has an inverse relation with the number of effective items in the database. This conveys that a large number of effective items facilitates better detection of a \emph{biased} engine. Intuitively, a large number of effective items in the database helps in clearer demarcation of an \emph{objective} engine from a \emph{biased} one. With many effective items in the database, an \emph{objective} system would have a higher probability of recommending effective items, while a \emph{biased} system always makes at least $\gamma$ fraction of its recommendations from the ad pool where the number of effective ads is limited. 
 
\bigskip\noindent{\bf Bias Probability.}
The more a \emph{biased} engine recommends from the ad-pool, the more apparent is its biased behavior. The fraction of total recommendations that are from the ad-pool is captured by the bias probability, $\gamma$. We see that the probability of Type $II$ error decays exponentially with increasing $\gamma$, and also that larger $\gamma$ facilitates easier anomaly detection (Conditions~\ref{item:bias-prob}, \ref{item:num-eff-ads}).

\bigskip\noindent{\bf Choice of $c=1/2$.}
In the course of the proof of Theorem \ref{THM:MAIN}, we prove that 
for any choice of $c$, the  Type I Error is bounded by $O(\,Q(m)\,m^{-c}\,)$. 
Hence, by changing $c$ in the algorithm, one can control the error probability. 
However, the downside of increasing $c$ is that it effectively increases the threshold $T(t)$, 
which results in requiring the bias probability $\gamma=\omega(\log m (1+(c/A))/n)$. 

\subsection{Applications}
\label{subsec:applications}
The proposed anomaly detection algorithm is readily applicable in the retail market. 
It can identify recommender systems that dole out sponsored advertisements in the garb of personalized recommendations. In this era of personalization, there are numerous other applications, two of which are described below. These two examples also illustrate the advantage of \emph{BiAD} in requiring simple binary feedback, 
allowing it to be applied in a wide variety of scenarios.

\bigskip\noindent{\bf Search Engine Bias.}
Search engine bias is one of the most important ethical issues surrounding search engines, 
and its social implications have been studied for more than a decade \cite{introna00shaping-the-web, zimmer06implications-paid-search, epstein14google-democracy}.  
The Stanford Encyclopedia of Philosophy \cite{tavani14sep-ethics-search} describes 
search engine bias  as non-neutrality of search engines,  where ``search algorithms do not use objective criteria'' or ``favor some values/sites over others in generating their list of results for search queries.'' 
A sponsored search engine in the late 1990s called GoTo ranked 
its search results purely based on bids from advertisers \cite{elgesem08search-engine-bias}. 
It was evidently unsuccessful due to users' mistrust of paid searches and was eventually acquired by Google. Google also uses an auction   to sell ads but displays them physically separated from organic search results.

 The pros and cons of enforcing transparency in the algorithms used for generating search results have been examined in \cite{elgesem08search-engine-bias, granka10politics-of-search}. Even in the absence of total transparency, anomaly detection systems such as \emph{BiAD} could be useful in identifying bias in search engines. With personalization being extended to search results \cite{dou07personalized-search, google09personalized-search, bing11personalized-search}, search engines virtually act as recommendation engines. With large number of potential search results, our model of recommender systems with a large database fits well in this problem where biased search engines correspond to \emph{biased} recommender systems. 
 In addition to search engines, this example can be extended to identify hidden sponsored advertisements in social networking sites and online news portals, all of which use personalization algorithms.

\bigskip\noindent{\bf Pharmaceutical Lobby.}
Pharmaceutical lobbying is another controversial issue that affects many parts of the world \cite{abraham02pharma-industry-political, landers04pharma-lobbying-us, lancet05pharma-lobbying-uk, shah10pharma-lobbying}. 
Among its many aspects, 
we focus on the marketing practices of large pharmaceutical companies which manipulate the opinions of doctors, health care providers and law-makers by providing biased information and through other tactics \cite{blumenthal04doctors-drug-companies, fugh07doctors-drug-companies}. There have been allegations that big drug companies influence physicians to prescribe their highly priced branded drugs even when other better or cheaper alternatives are available \cite{landefeld09neurontin-marketing, goldacre14bad-pharma}.

Again, our interpretation of a \emph{biased} recommendation engine is well-suited to model this scenario. Since drugs are prescribed on a person-to-person basis, health care providers can be viewed as recommendation service providers who recommend drugs to patients, and the lobbying drug companies act as advertisers. A health care system that favors a few incompetent (expensive or ineffective) drugs in spite of the availability of other \emph{better} (cheaper or more effective) alternatives matches well with our definition of a \emph{biased} recommendation engine. With data samples consisting of prescriptions and their efficacy on patients, anomaly detection algorithms like \emph{BiAD} could help watchdog agencies in identifying such malpractices.

\section{Numerical Results}
\label{sec:num-results}
We evaluate our algorithm through offline simulations, with careful considerations for ensuring proximity to real world scenarios.
\subsection{Simulation Setup}
\label{subsec:setup}
Given below is a detailed description of the methods we adopt in our simulations to replicate the different components of a recommender system.

\bigskip\noindent{\bf User-Item Database}.
Estimating users' opinions about all items in a database is essential for real-world recommender systems, 
and hence for simulating those recommender systems as well. 
However, the ground truth on such data set is not available, 
since in existing data sets each user typically only rates a small subset of items, and those ratings are also noisy and possibly biased.  

For a complete user-item rating matrix, we take available sparse data sets (\cite{mcauley2013hidden, bennett2007netflix,glens}) and renormalize the ratings on a linear scale from zero to ten. 
The missing entries in the sparse matrix are then filled in using the matrix completion algorithm from \cite{montanarimatrix}. 
For the purpose of simulating real-world user opinions, we consider this completed matrix as ground truth. 
We evaluate our algorithm on three data sets:
\begin{enumerate}[label=D\arabic{*}, topsep=0pt,itemsep=-1ex,partopsep=1ex,parsep=1ex]
\item \label{item:D1} a subset of the Amazon cellphones and accessories data set \cite{mcauley2013hidden} with $3671$ users and $8728$ items,
\item \label{item:D2} a subset of the Netflix Prize data set \cite{bennett2007netflix} with $2951$ users and $9259$ movies, and
\item \label{item:D3} a subset of the Movielens 10m ratings data set \cite{glens} with $3671$ users and $8729$ movies.
\end{enumerate}

\bigskip\noindent{\bf Recommendation Engine}.
Due to non-transparency of recommendation strategies, it is not exactly known how recommendation engines behave. As a representation of the learning strategies used by these systems, we use two learning algorithms popular in literature:
\begin{enumerate}[label=L\arabic{*}, topsep=0pt,itemsep=-1ex,partopsep=1ex,parsep=1ex]
\item \label{item:L1} Matrix factorization. Specifically, we use the inexact ALM method proposed in \cite{lin2010augmented}.
\item \label{item:L2} User-based collaborative filtering (with Pearson correlation as the similarity metric \cite{resnick1994grouplens}).
\end{enumerate}

To simulate the temporal dynamics of a recommender system, the recommendation engine is initially supplied a sparse subset of the user-item ratings chosen according to a power-law degree distribution observed in real-world data sets \cite{huang2007analyzing}. (Specifically, the number of feedback entries from each user is chosen from a $pareto(3,3)$ distribution.) In each round of recommendation, the engine recommends one item to each user and observes users' feedback about the recommended items. It periodically updates its estimate of the users' preferences based on this feedback. In our experiments, we set the frequency of these updates to once in every $5$ rounds. 

It is natural for a recommendation engine to have an \emph{explore} component to address the cold start problem and have wider coverage of the database~\cite{li2010contextual}. Therefore, in all our experiments we invoke random explore for $0.1$ fraction of the recommendations made. In a recommendation meant for exploration, an item is chosen uniformly at random from the database, provided it has not been shown previously. 
 
For all other recommendations which do not explore, we use the following recommendation strategy -- for each user, the items are ranked according to the estimated preferences. To make a recommendation, an \emph{objective} recommendation engine chooses the highest ranked item among the items not yet recommended. The recommendation strategy of a \emph{biased} engine follows the description in Section~\ref{subsubsec:biased} -- for any recommendation, with probability $1-\gamma,$ like an \emph{objective} engine, it recommends the highest ranked item and with probability $\gamma,$ it recommends an item from the ad-pool. We consider two kinds of ad selection strategies -- from the among the ads that have not already been recommended to the user,
\begin{enumerate}[label=A\arabic{*}, topsep=0pt,itemsep=-1ex,partopsep=1ex,parsep=1ex]
\item \label{item:A1} An ad item is chosen uniformly at random.
\item \label{item:A2} The ad item which has the highest ranking is chosen. 
\end{enumerate}
 Strategy~\ref{item:A2} corresponds to customization of ads according to users' tastes. Note that this is harder to detect than strategy~\ref{item:A1} since it has higher likelihood of recommending effective ads.

\bigskip\noindent{\bf Anomaly Detection System}.
For players in the anomaly detection system, we randomly choose a subset of users from the data set. In experiments which test the performance of the algorithm with increasing number of players, we choose the subset of players incrementally. 

As explained in Sections~\ref{sec:sys-model} and \ref{sec:theory-results}, the algorithm requires feedback samples of efficacy of the recommendations made to the players. For our experiments, we adopt the characterization of efficacy used in our theoretical model given by Definition~\ref{defn:effective}. Note that the number of effective items in the database for any user depends on the efficacy threshold $\eta.$ To be able to test the algorithm for different number of effective items, we choose a different efficacy threshold for each of the data sets. Specifically, we set $\eta = 5.5, 8.0,$ and $8.8$ for data sets~\ref{item:D1}, \ref{item:D2}, and \ref{item:D3}, which correspond to an average number of effective items of $80,$ $250,$ and $150$ respectively.

\subsection{Results}
\label{subsec:results}
We evaluate the performance of \emph{BiAD} with variations in different parameters of the recommender system and the anomaly detection system. To demonstrate its effectiveness in different settings, we present performance results for various combinations of data sets (\ref{item:D1}-\ref{item:D3}) and recommendation algorithms (\ref{item:L1}-\ref{item:L2}, \ref{item:A1}-\ref{item:A2}). An \emph{objective} recommendation engine is represented by its learning algorithm (\ref{item:L1} or \ref{item:L2}) while a \emph{biased} recommendation engine is represented by its learning algorithm (\ref{item:L1} or \ref{item:L2}) and its ad-recommendation strategy (\ref{item:A1} or \ref{item:A2}). 
Although we present experimental results for specific combinations for space limitations, 
other settings give similar trade offs. 
Specifically, the simulation results corroborate our theoretical analysis of the tradeoffs between various parameters in Section~\ref{sec:discussion}.

We now describe how the performance depends on the choice of various parameters in \emph{BiAD}. 
We set the parameter $\hat{A}(t)$ according to Equation~\eqref{eqn:hatA} in all the simulations. Other parameters of the algorithm are discussed below.

\bigskip\noindent{\bf Threshold}.
As explained in Section~\ref{subsec:thresh}, varying the threshold parameter $T(t)$ in the algorithm affects Type $I$ and Type $II$ error probabilities in opposite ways. Using lower values of threshold increases probability of Type $I$ error and decreases that of Type $II$ error. The threshold given by Equation~\eqref{eqn:T} is designed to ensure, irrespective of the number of players in the anomaly detection system, low probability of false positive (Type $I$ error) even when the estimated user preferences are noisy. This is especially important if the risk associated with false implication of an \emph{objective} recommendation engine is high.

We observe that a less conservative threshold gives a better balance between the two types of errors. Specifically, we use a threshold that can be proved to guarantee low error rates under the assumption that the recommendation engine's estimation of user preferences are accurate. This threshold, denoted by $T'(t)$, is equal to the value of $\hat{p}(t)$ given by Equation~\eqref{eqn:hatp}. In all our simulations, we show the performance of \emph{BiAD} for both these threshold choices. Simulation results show that $T'(t)$ gives better performance than $T(t)$ except in one case (Figure.~\ref{subfig:numuser2}) where those two choices give similar performances. 
 
In both these thresholds, $\EE [ P^{\hat{\mathcal{A}}}_u(l) ]$ in Equation~\eqref{eqn:p} is substituted with the right hand side of \eqref{eqn:approx-thresh}. This requires knowledge of the number of effective items for each player in the anomaly detection system. In our simulations, \emph{BiAD} approximates this with the average number of effective items for all the users. Effectively, it uses the following value of $p(t)$ instead of Equation~\eqref{eqn:p}: 
\begin{align}
\label{eqn:est-eff-items}
p(t) &= \sum_{l=1}^t \frac{n \hat{A}(t)}{\tilde{f}([m]) - l + 1},
\end{align}
where $\tilde{f}([m])$ is an estimate of the average number of effective items in the database $[m]$. We assume that this average number is not very difficult to estimate and in all the results, unless specified, \emph{BiAD} has an accurate estimate of this number.

\bigskip\noindent{\bf Number of Rounds in the Test}. 
The number of rounds of recommendation $Q(m)$ affects the error probability. This is seen in Figure~\ref{fig:qm} which shows the variation of sum of Type $I$ and Type $II$ errors with $Q(m)$. Type $I$ error rate is in fact close to zero for all values of $Q(m)$ for both the thresholds, so the plots effectively show Type $II$ error rates. Theorem~\ref{THM:MAIN} guarantees detection of a \emph{biased} engine if $Q(m) \geq A.$ The plots show that \emph{BiAD} detects $8$ ads if $Q(m)$ is at least $8$ and $15$ for $T'(t)$ and $T(t)$ respectively.
\begin{figure}
\captionsetup{width=0.4\textwidth}
\begin{center}
\subfloat[Data set : \ref{item:D1} , Algorithm : \ref{item:L2} + \ref{item:A1} , $n = 100, \gamma = 0.45, A = 8.$]{\includegraphics[width=.5\linewidth]{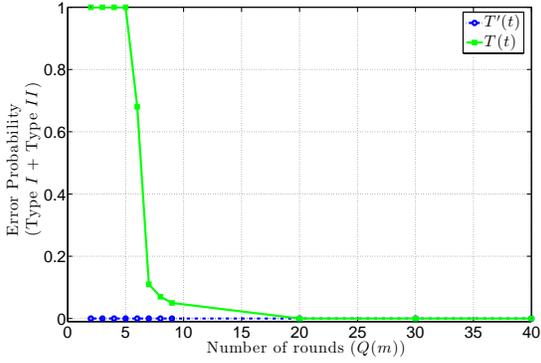}}
\hfill
\subfloat[Data set : \ref{item:D2} , Algorithm : \ref{item:L2} + \ref{item:A2} , $n = 100, \gamma = 0.35, A = 8.$]{\includegraphics[width=.5\linewidth]{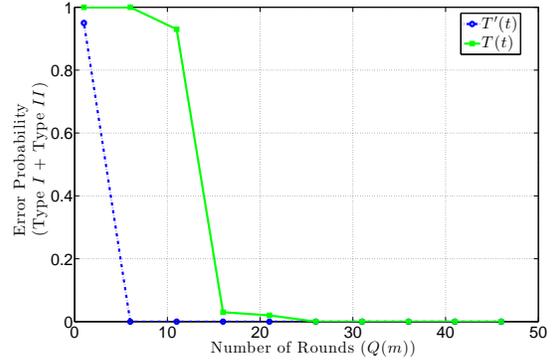}}
\captionsetup{width=\textwidth}
\caption{Number of rounds in the test, $Q(m)$ affects the number of ads that can be detected at least $8$ and $15$ rounds required for $T'(t)$ and $T(t)$ respectively.
}
\label{fig:qm}
\end{center}
\end{figure}
For all the remaining simulations, we set the parameter $Q(m) = 40.$

\bigskip\noindent{\bf Number of Players}.
Larger number of players in the anomaly detection system indicates higher number of input samples to the algorithm, and as expected, the algorithm performs better as this number increases. In Figure~\ref{fig:numuser}, we plot the sum of Type $I$ and Type $II$ error rates with increasing number of users. To detect a \emph{biased} engine with the specified value of $\gamma$, these plots show that $70$ and $100$ players respectively are sufficient when $T'(t)$ and $T(t)$ are chosen to be the threshold parameter. We use $100$ players in all other simulations.
\begin{figure}
\captionsetup{width=0.4\textwidth}
\begin{center}
\subfloat[Data set : \ref{item:D1}, Algorithm : \ref{item:L1} + \ref{item:A1}, $A = 8, \gamma = 0.45.$]{\includegraphics[width=.5\linewidth]{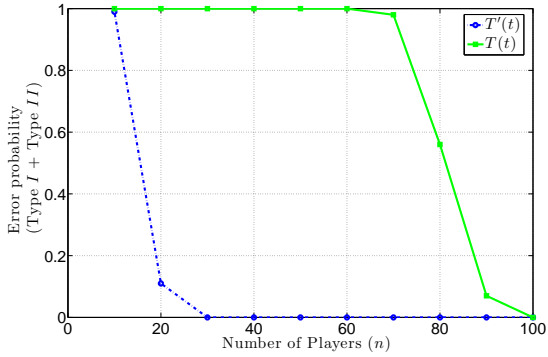}}
\hfill
\subfloat[Data set : \ref{item:D2}, Algorithm : \ref{item:L1} + \ref{item:A1} , $A = 8, \gamma = 0.35.$ \label{subfig:numuser2}]{\includegraphics[width=.5\linewidth]{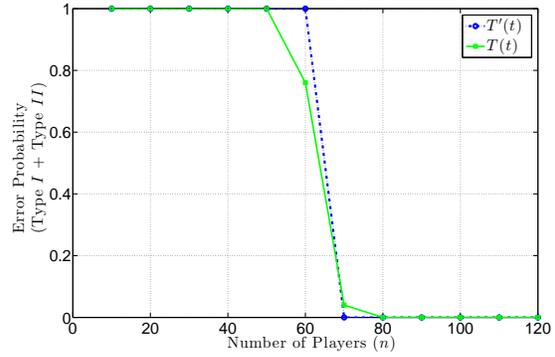}}
\captionsetup{width=\textwidth}
\caption{The performance improves with  number of collaborating users $n$. }
\label{fig:numuser}
\end{center}
\end{figure}

In addition to the choice of parameters in the algorithm, various aspects of the recommender system affect the performance of \emph{BiAD}. These are described below.

\bigskip\noindent{\bf Size of the Database}. 
Theorem~\ref{THM:MAIN} shows that \emph{BiAD} performs well for recommender systems with large item databases. Databases of varying size are constructed by sub-sampling items from the original data set. Figure~\ref{fig:numplayer} shows the variation of Type $I$ and Type $II$ errors with the size of the database. $T(t)$ and $T'(t)$ have very similar performance for the parameters in this experiment. The plot shows that, for detection of $8$ ads recommended $35$ percent of the time, the algorithm is effective for databases of size $1500$ items or larger.

\begin{figure}
\begin{center}
{\includegraphics[width=.5\linewidth]{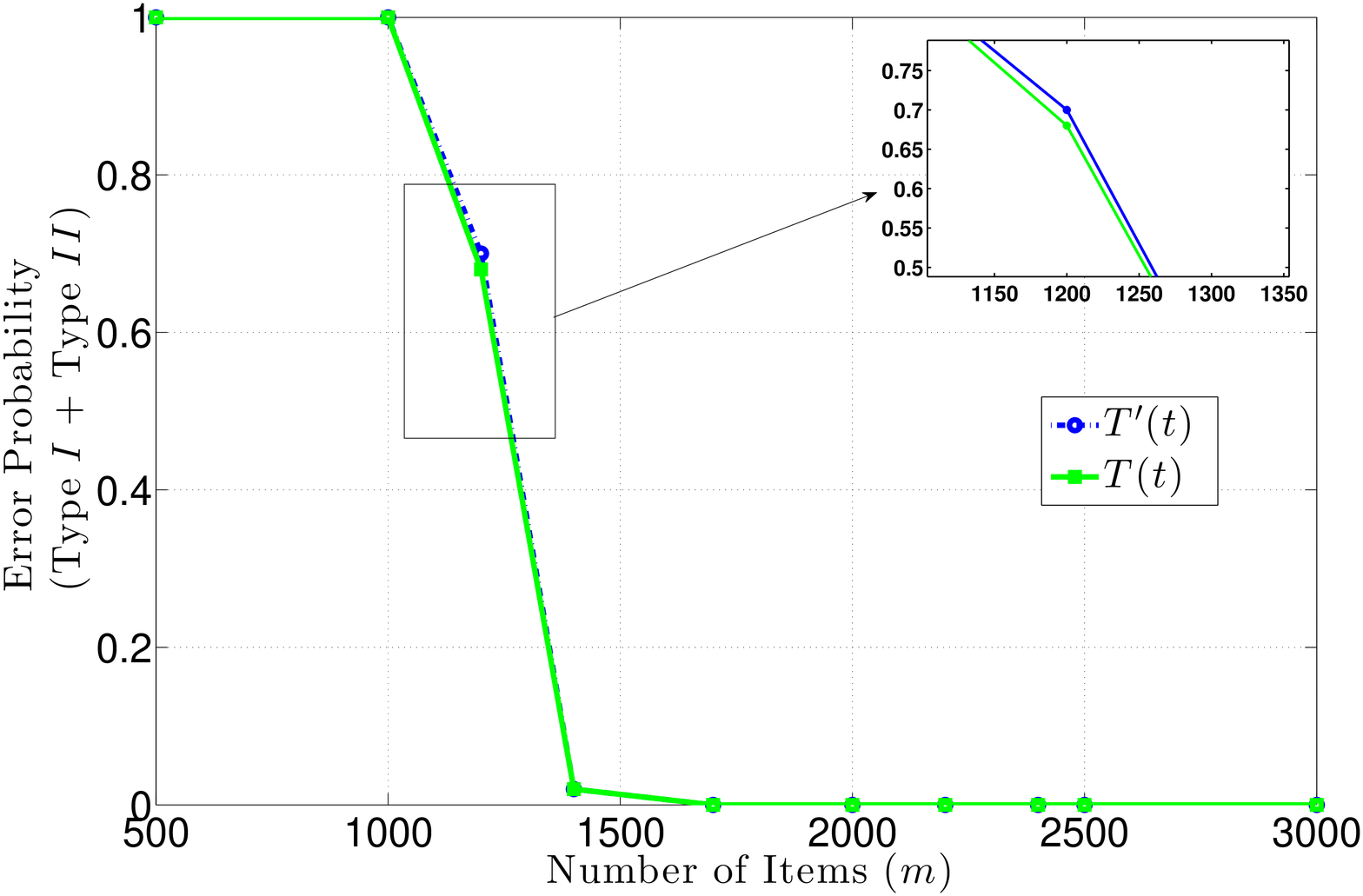}}
\caption{Variation of Type $I$ + Type $II$ error rates with size of data set. Data set : \ref{item:D2}, Algorithm : \ref{item:L1} + \ref{item:A2}, $A = 8, \gamma = 0.35, n = 100.$.
When there are more choices to recommend, the user satisfaction with {\em objective} recommender systems improves making  detection easier. 
}
\label{fig:numplayer}
\end{center}
\end{figure}
We now demonstrate that \emph{BiAD} has been appropriately designed to identify \emph{biased} engines that systematically recommend (make a sizable fraction of recommendations) from a small ad-pool.

\bigskip\noindent{\bf Size of the Ad-Pool}.
Theorem~\ref{THM:MAIN} shows that \emph{BiAD} guarantees detection of a \emph{biased} engine that has a small ad-pool. This same effect is also observed in simulations -- Figure~\ref{fig:numads} shows rate of missed detection (Type $II$ error rate) with varying size of the ad-pool. It is seen that both the thresholds perform well for small number of ads, while threshold $T'(t)$ can detect an ad-pool of size upto $25$.
\begin{figure}
\captionsetup{width=0.4\textwidth}
\begin{center}
\subfloat[Data set : \ref{item:D1} , Algorithm : \ref{item:L1} + \ref{item:A1} , $n = 100, \gamma = 0.45.$]{\includegraphics[width=.5\linewidth]{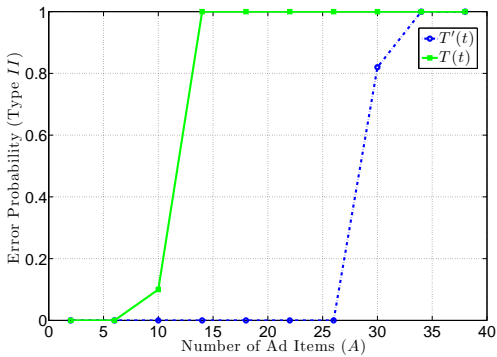}}
\hfill
\subfloat[Data set : \ref{item:D2} , Algorithm : \ref{item:L2} + \ref{item:A2} , $n = 100, \gamma = 0.35.$]{\includegraphics[width=.5\linewidth]{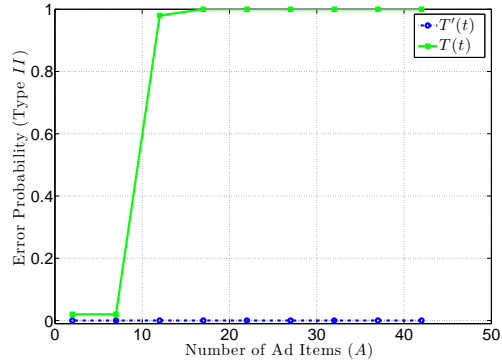}}
\captionsetup{width=\textwidth}
\caption{As the size of the ad-pool $A$ increases, the (personalized) ads become similar to effective recommendations, making 
it hard to detect (Type $II$ error is large).}
\label{fig:numads}
\end{center}
\end{figure}

\bigskip\noindent{\bf Bias Probability}.
The bias probability $\gamma$ quantifies the intensity of bias of the recommendation engine. Plots (Figure~\ref{fig:biasp1}) for Type $II$ error rate with $\gamma$ show that \emph{more biased} (higher $\gamma$) engines are easier to detect. 
\begin{figure}
\captionsetup{width=0.4\textwidth}
\begin{center}
\subfloat[Data set : \ref{item:D2} , Algorithm : \ref{item:L1} + \ref{item:A2} , $n = 100, A = 8.$]{\includegraphics[width=.5\linewidth]{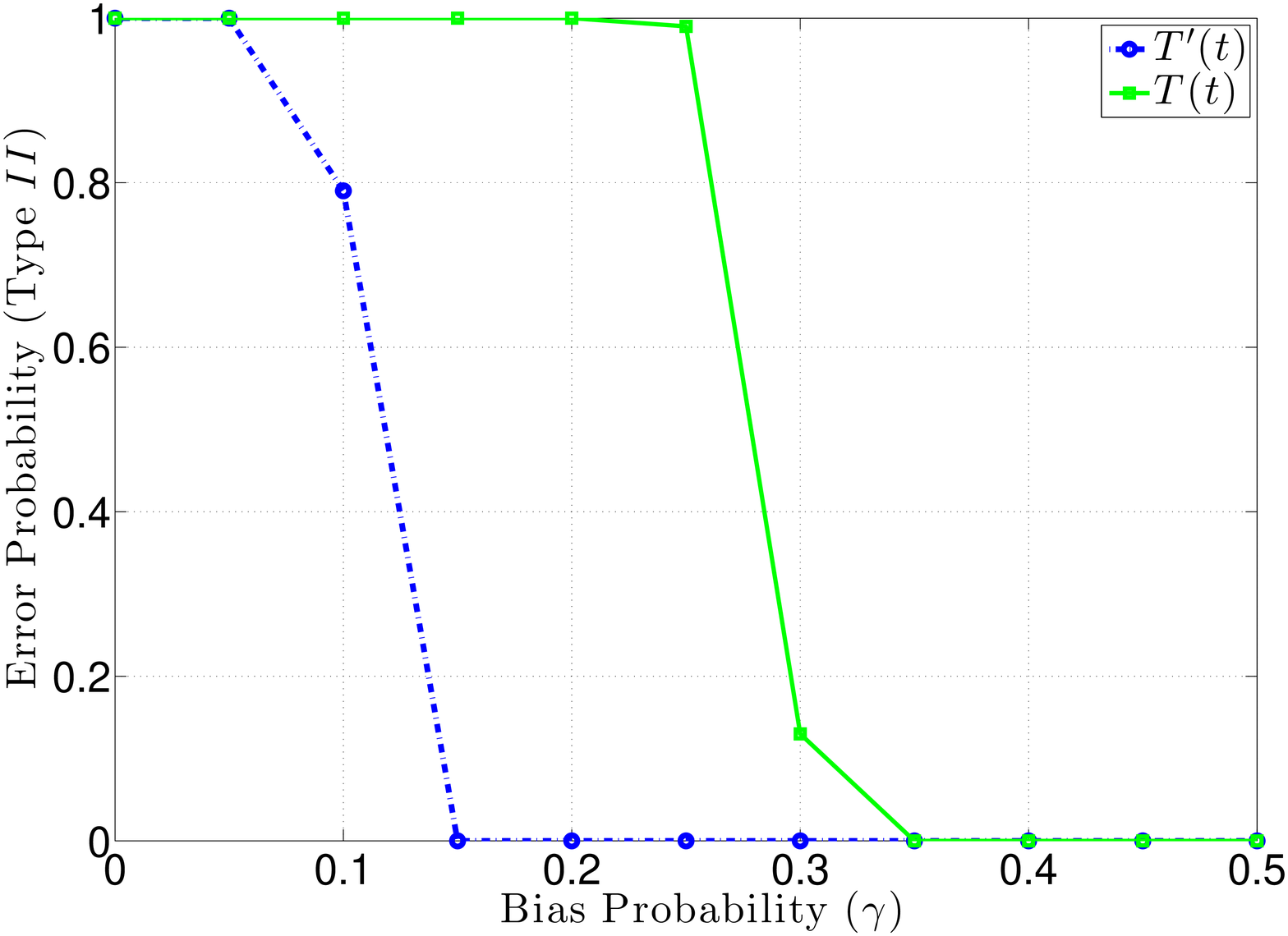}}
\hfill
\subfloat[Data set : \ref{item:D3} , Algorithm : \ref{item:L1} + \ref{item:A2} , $n = 100, A = 8.$]{\includegraphics[width=.5\linewidth]{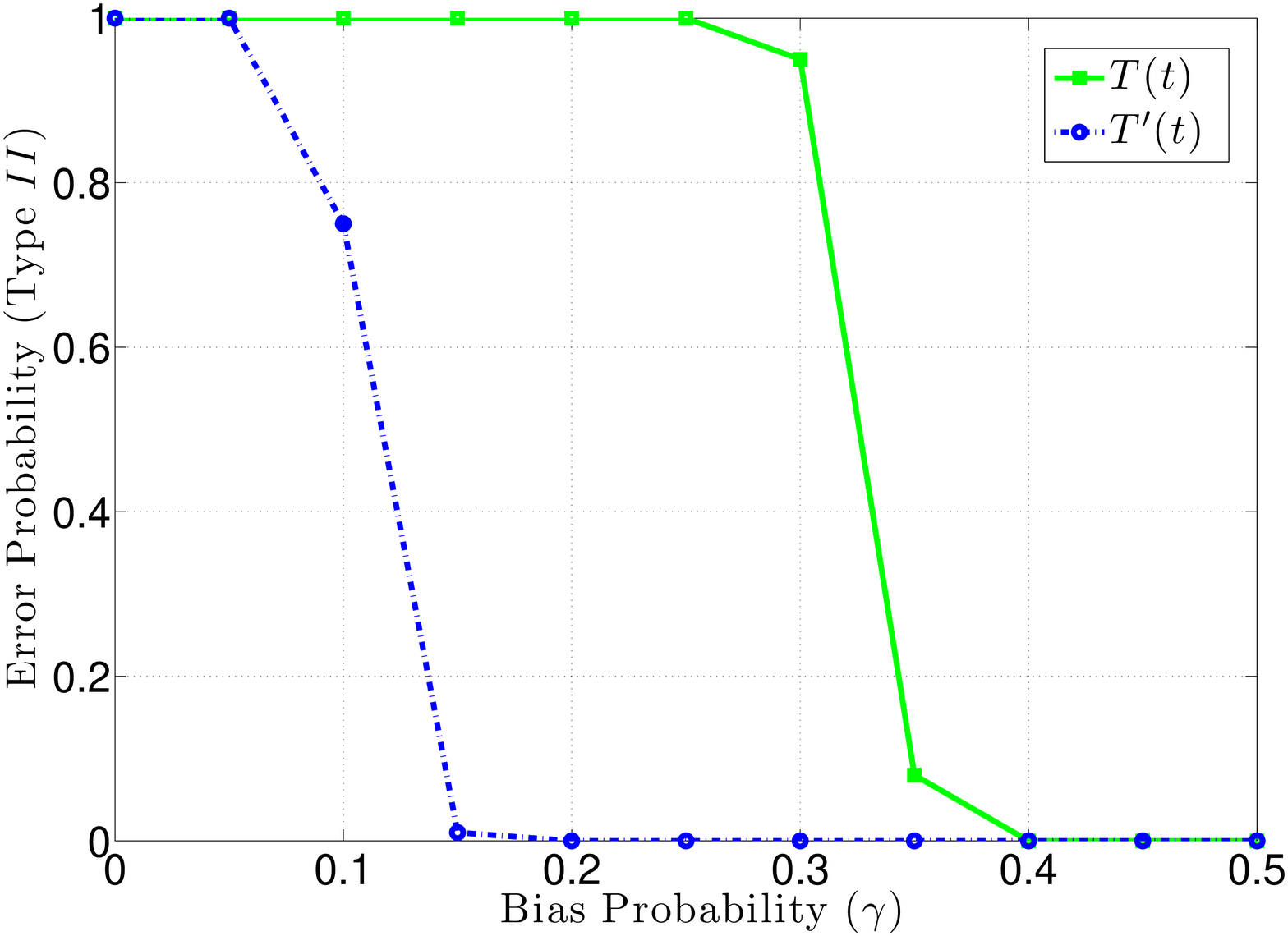}}
\captionsetup{width=\textwidth}
\caption{Type $II$ error rate decreases as bias probability $\gamma$ increases.}
\label{fig:biasp1}
\end{center}
\end{figure}

\bigskip\noindent{\bf Estimate of the Number of Effective Items}. 
In all the simulations above, it is assumed that \emph{BiAD} has an accurate estimate of the average number of effective items ($\tilde{f}([m])$) which is used to determine the threshold parameter in the algorithm (See Equation~\eqref{eqn:est-eff-items}). Note that overestimation of this parameter lowers the threshold parameter thereby increasing the probability of Type $I$ error and decreasing the probability of Type $II$ error. Figure~\ref{fig:fm} shows the effect of variations in this estimate for data set~\ref{item:D3} which has an average of $150$ effective items. We observe that $T'(t)$ performs well for a wide range of estimates. In the case of $T(t),$ it is safer to overestimate the parameter $\tilde{f}([m])$  than to underestimate it. 
\begin{figure}
\begin{center}
{\includegraphics[width=.5\linewidth]{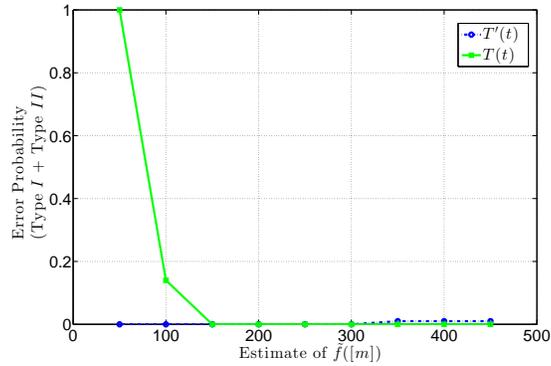}}
\caption{Variation of Type $I$ + Type $II$ error rates with perturbations in the algorithm's estimate of the average number of effective items $\tilde{f}([m])$. Data set : \ref{item:D3}, Algorithm : \ref{item:L1} + \ref{item:A1}, $A = 8, \gamma = 0.4, n = 100$.}
\label{fig:fm}
\end{center}
\end{figure}

\subsection{Ineffectiveness of Basic Average Test} 
\label{subsec:basic-avg}
As explained in the introduction (Section~\ref{sec:intro}), we demonstrate the inability of the basic average test to distinguish between random errors and deliberate promotion of ads. This test computes the average rating across all recommendations and decides between the two hypotheses based on a threshold parameter. With the specifics of the recommendation strategy (explore probability) unknown, it is difficult to estimate the right value of threshold. For an explore probability of $0.1,$ Figure~\ref{fig:naive1} shows the performance of the basic average test for different values of the threshold, denoted $\tau.$ It is seen that threshold values around $3$ give the best performance. But, as shown in Figure~\ref{fig:naive2}, this same threshold value fails for other values of explore probability. For example, the basic average test falsely declares an \emph{objective} recommendation engine with $20$ percent explore probability as \emph{biased}. This shows that the correct choice of $\tau$ is sensitive to the explore probability. In contrast, note that \emph{BiAD} has nearly zero Type $I$ error rate for all values of explore probabilities.

\begin{figure}

\centering
{\includegraphics[width=0.5\linewidth]{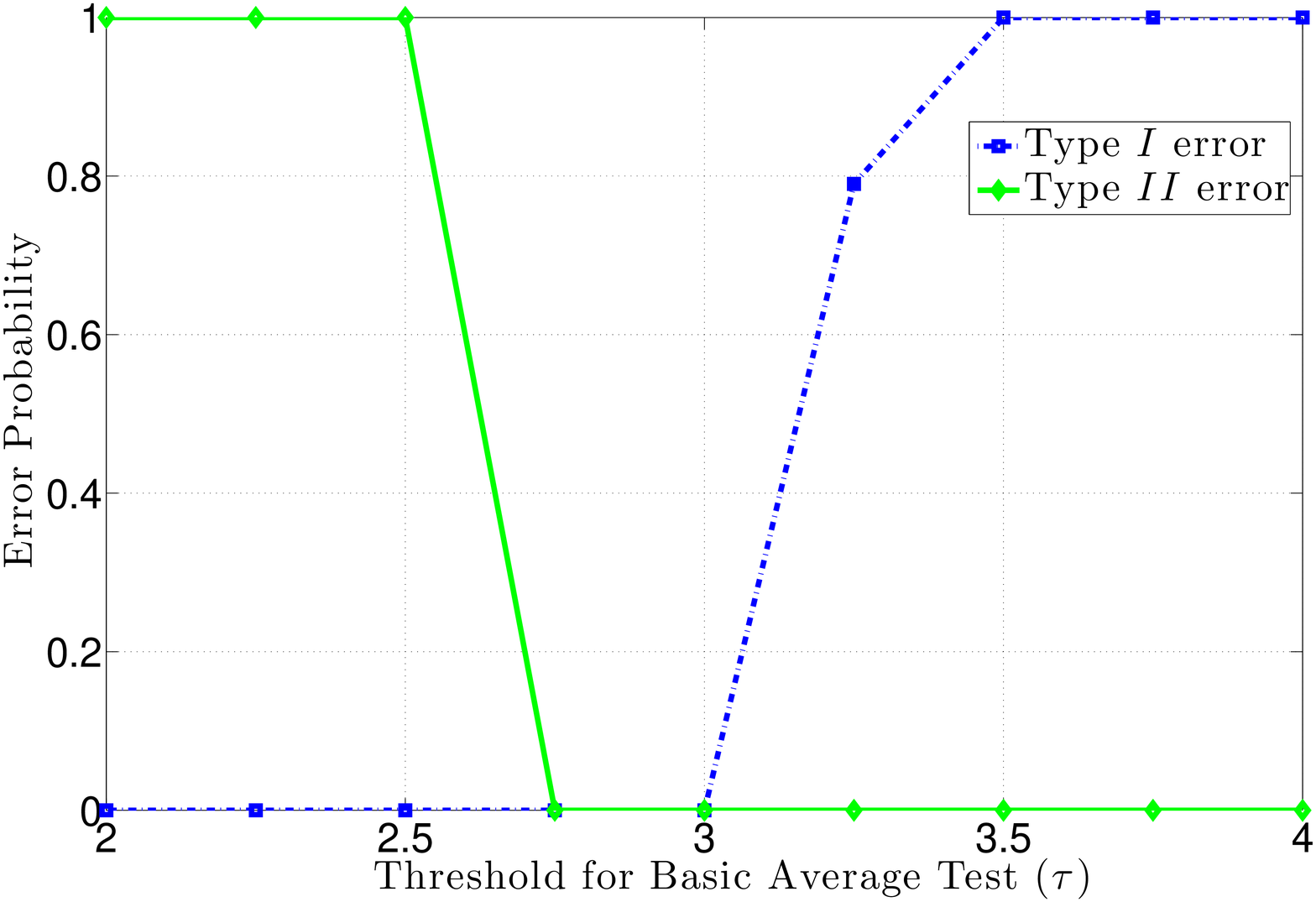}}
\caption{Variation of Type $I$ and Type $II$ error rates with threshold $\tau$ for the basic average test shows that 
the naive approach is sensitive to the choice of parameter $\tau.$
Data set : \ref{item:D1}, Algorithm : \ref{item:L1} + \ref{item:A1}, $A = 8, \gamma = 0.45, n = 100$, Explore Probability = $0.1.$}\label{fig:naive1}

\centering
{\includegraphics[width=0.5\linewidth]{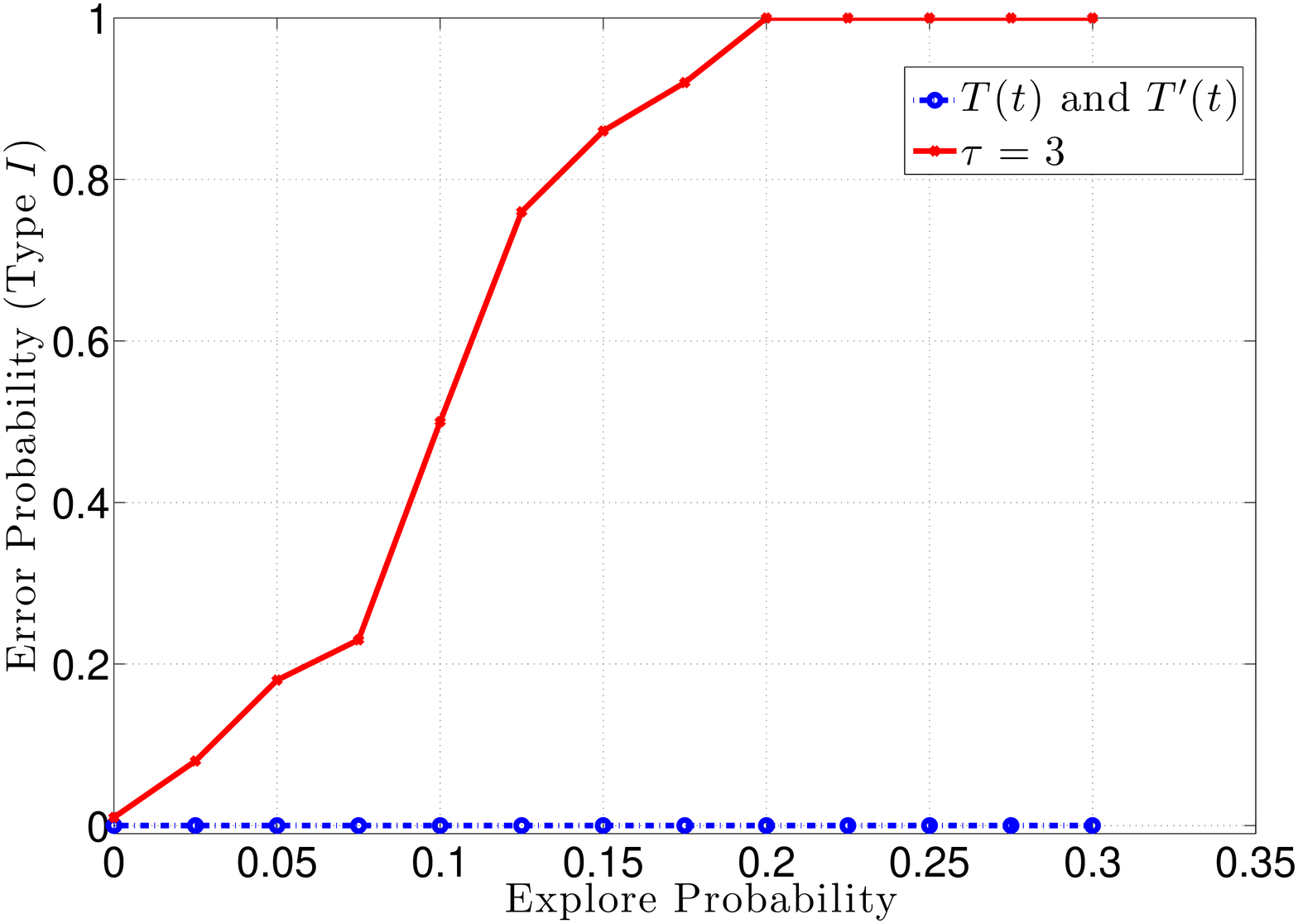}}
\caption{Variation of Type $I$ error rates with variation in explore probability shows that the threshold for the basic average test is sensitive the value of explore probability. Data set : \ref{item:D1}, Algorithm : \ref{item:L1},  $n = 100.$}
\label{fig:naive2}

\end{figure}

\section{Proofs}
\label{sec:proof}

Before we provide the proof of the main theorem, 
we state and prove two lemmas used in proving the main theorem.
\begin{lemma}
\label{lem:weights-upper-bound}
	For any \emph{objective} recommendation algorithm 
	and for any user $u$, item $i$, and time $t<(F(\hat{R}_{ui}(t), \hat{\bR}_u(t))+1)$, 
	the probability that item $i$ is recommended to user $u$ at time $t$ is upper bounded by 
\begin{align}
	W_{ui}(t) \leq \frac{1}{F\left(\hat{R}_{ui}(t), \hat{\bR}_u(t)\right) - t + 1} \;. 
	\label{eqn:lemma2}
\end{align}
\end{lemma}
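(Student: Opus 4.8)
The plan is to combine the monotonic property of the weight matrix with the fact that its rows are stochastic, via a short pigeonhole argument. Write $k := F\left(\hat{R}_{ui}(t), \hat{\bR}_u(t)\right)$ for the rank of item $i$, so that the set $\mathcal{S} := \left\{ j : \hat{R}_{uj}(t) \geq \hat{R}_{ui}(t) \right\}$ of items estimated to be at least as good as $i$ has exactly $|\mathcal{S}| = k$ elements. The hypothesis $t < k+1$ is there precisely to guarantee $k - t + 1 \geq 1$, so that the claimed bound is well defined and one may divide by $k-t+1$ at the end.

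First I would dispose of the trivial case: if item $i$ has already been recommended to user $u$ in one of the rounds $1, \dots, t-1$, then the no-repeat assumption ($\sum_{t} \mathds{1}_{ui}(t) \leq 1$) forces the probability of recommending $i$ at round $t$ to be zero, which is trivially at most $1/(k-t+1)$. So I may assume $i$ is still unshown at time $t$, in which case the probability that the objective engine recommends $i$ equals $W_{ui}(t)$ (it is sampled with that weight), and it suffices to bound $W_{ui}(t)$ itself.

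Next I would pass to the sub-collection $\mathcal{S}' \subseteq \mathcal{S}$ of items in $\mathcal{S}$ that have \emph{not} yet been shown to user $u$. The key counting observation is that, since exactly one item is recommended per round, at most $t-1$ items have been shown to user $u$ before round $t$; hence at most $t-1$ elements of $\mathcal{S}$ can fail to lie in $\mathcal{S}'$, giving $|\mathcal{S}'| \geq k - (t-1) = k - t + 1$. Since $i$ is itself unshown and trivially satisfies $\hat{R}_{ui}(t) \geq \hat{R}_{ui}(t)$, we have $i \in \mathcal{S}'$. Now the monotonic property applies without objection: for every $j \in \mathcal{S}'$, the items $i$ and $j$ are both unshown and $\hat{R}_{uj}(t) \geq \hat{R}_{ui}(t)$, so $W_{uj}(t) \geq W_{ui}(t)$. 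Summing over $\mathcal{S}'$ and using that $\bW(t)$ is stochastic,
\begin{align*}
(k - t + 1)\, W_{ui}(t) \;\leq\; \sum_{j \in \mathcal{S}'} W_{uj}(t) \;\leq\; \sum_{j=1}^{m} W_{uj}(t) \;=\; 1,
\end{align*}
which rearranges to the claimed bound $W_{ui}(t) \leq 1/(k - t + 1)$.

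The computation is short, and the single place that needs care — the main obstacle — is respecting the exact form of the monotonic property, which only compares weights of \emph{pairs of unshown} items. This is why I must restrict to $\mathcal{S}'$ rather than all of $\mathcal{S}$, and why the correct lower bound is $|\mathcal{S}'| \geq k - t + 1$ rather than $k$; the loss of up to $t-1$ items is exactly accounted for by the one-recommendation-per-round structure. The hypothesis $t < k+1$ enters only to keep this count strictly positive so that the final division is legitimate.
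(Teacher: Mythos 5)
Your proof is correct and follows essentially the same route as the paper's: both arguments count the items ranked at least as high as $i$ that remain unshown (at least $F(\hat{R}_{ui}(t),\hat{\bR}_u(t)) - t + 1$ of them), apply the monotonic property to lower-bound each of their weights by $W_{ui}(t)$, and conclude via the row-stochasticity of $\bW(t)$; your set-based phrasing and explicit handling of the already-shown case are only cosmetic refinements of the paper's indicator-sum computation.
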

\begin{proof}
\begin{align*}
1 & \stackrel{(a)}{=} \sum_{j=1}^m W_{uj}(t)	\\
& \geq \sum_{j=1}^m \mathds{1}\left\{\hat{R}_{uj}(t) \geq \hat{R}_{ui}(t)\right\}\left(1-\sum_{l=1}^{t-1} \mathds{1}_{ui}(l)\right) W_{uj}(t)	\\
& \stackrel{(b)}{\geq} \sum_{j=1}^m \mathds{1}\left\{\hat{R}_{uj}(t) \geq \hat{R}_{ui}(t)\right\}\left(1-\sum_{l=1}^{t-1} \mathds{1}_{ui}(l)\right) W_{ui}(t)	\\
& =  \left\vert\left\{j: \hat{R}_{uj}(t) \geq \hat{R}_{ui}(t), \, \sum_{l=1}^{t-1} \mathds{1}_{uj}(l) = 0\right\}\right\vert W_{ui}(t)	\\
& \geq \left( \left\vert\left\{j: \hat{R}_{uj}(t) \geq \hat{R}_{ui}(t)\right\}\right\vert - \left\vert\left\{j: \sum_{l=1}^{t-1} \mathds{1}_{uj}(l) \neq 0\right\}\right\vert \right) W_{ui}(t)	\\
& = \left(F\left(\hat{R}_{ui}(t), \hat{\bR}_u(t)\right) - (t - 1) \right) W_{ui}(t),
\end{align*}
where the $(a)$ and $(b)$ follow from the characterization of an \emph{objective} recommendation algorithm in Section~\ref{subsubsec:objective} -- equality $(a)$ due to the fact that $\bW(t)$ is a stochastic matrix (Property~\ref{item:stoc-mx}), and inequality $(b)$ due to the monotonic property satisfied by the weight matrix (Property~\ref{item:monotonic}). 
The above inequality gives the desired bound in \eqref{eqn:lemma2} for $t<(F(\hat{R}_{ui}(t), \hat{\bR}_u(t))+1)$.

\begin{lemma}
\label{lem:chernoff-bound}
Let $\{X_i, \, i=1, \dots, k\}$ be independent Bernoulli random variables with mean $\{p_i, \, i=1, \dots, k\},$ and let $\sum_{i=1}^k p_i \leq p.$ Then,
\begin{align*}
\PP\left[\sum_{i=1}^k X_i \geq T\right] \leq \exp\left(-T \log\left(\frac{T}{p}\right) + T - p\right) \fall T > p.
\end{align*}
\end{lemma}

Using Chernoff bound for independent random variables, we have, for any $\theta > 0,$
\begin{align*}
\PP\left[\sum_{i=1}^k X_i \geq T\right] & \leq e^{-\theta T}\prod_{i=1}^k \EE\left[e^{\theta X_i}\right]	\\
& = e^{-\theta T}\prod_{i=1}^k \left(p_ie^{\theta} + 1 - p_i\right)	\\
& \leq e^{-\theta T} \left(\frac{1}{k}\sum_{i=1}^k \left(p_ie^{\theta} + 1 - p_i\right)\right)^k	\\
& \leq e^{-\theta T} \left(\frac{p}{k}\left(e^{\theta}-1\right) + 1\right)^k,
\end{align*}
where the second inequality follows from the fact that the geometric mean of non-negative numbers is at most their arithmetic mean, 
and the last inequality follows for $\theta>0$, which is true for the choice of 
$\theta = \log\left({(k-p)T}/({p(k-T)})\right)$ for any $T > p$. Then, we get 
\begin{align*}
\PP\left[\sum_{i=1}^k X_i \geq T\right] & \leq \left(\frac{p(k-T)}{(k-p)T}\right)^T\left(\frac{(k-p)T}{(k-T)k} + 1 - \frac{p}{k}\right)^k	\\
& = \left(\frac{p}{T}\right)^T \left(\frac{k-p}{k-T}\right)^{k-T}	\\
& = \left(\frac{p}{T}\right)^T \left(1 + \frac{T-p}{k-T}\right)^{k-T}	\\
& \leq \left(\frac{p}{T}\right)^T e^{ T - p}	\\
& = \exp\left(-T \log\left(\frac{T}{p}\right) + T - p\right).
\end{align*}
\end{proof}

\begin{proof}[Proof of Theorem~\ref{THM:MAIN}]
The proof of the theorem consists of two parts which give upper bounds for probability of Type I and Type II errors. 

\subsubsection*{Type I Error}
The algorithm makes a Type $I$ error if it declares an \emph{objective} recommendation engine to be \emph{biased}. This section shows that the probability that the algorithm makes Type $I$ Error is low.

Suppose that the recommendation engine uses an \emph{objective} recommendation algorithm. We first bound the probability that \emph{BiAD} accepts $H_1$ in round $t.$ Recall that the algorithm accepts $H_1$ in round $t$ if $S(t) \geq T(t),$ and that $$S(t) = \max_{\left\{\hat{\mathcal{A}} \subseteq [m]: \left\vert \hat{\mathcal{A}}\right\vert = \hat{A}(t) \right\}} \sum_{i \in \hat{\mathcal{A}}} B_i(t).$$ 

Consider a fixed $\hat{\mathcal{A}} \subseteq [m]$ such that $\vert \hat{\mathcal{A}} \vert = \hat{A}(t).$ We first bound the probability that $\sum_{i \in \hat{\mathcal{A}}} B_i(t) \geq T(t)$ and then use union bound over all possible $\hat{\mathcal{A}}$ to obtain an upper bound on the probability that $S(t) \geq T(t).$ In this direction, we define 
\begin{equation}
\label{eqn:dislike-count}
X_u(l) := \sum_{i \in \hat{\mathcal{A}}}  \mathds{1}_{ui}(l) \cdot \mathds{1} \left\{ R_{ui} < \eta \right\}.
\end{equation} 
Note that $X_u(l)$ is equal to $1$ if in round $l$, player $u$ is recommended some item from set $\hat{\mathcal{A}}$ that is not effective and is $0$ otherwise. Given $\left\{ \bW(l), \hat{\bR}(l), \, l \in [t] \right\},$ we have that $\left\{ X_u(l), \, l \in [t], \, u \in [n] \right\}$ are Bernoulli random variables independent of each other. For every $1 \leq u \leq n$ and $1 \leq l \leq t,$ the mean of $X_u(l)$ can be bounded as follows:
\begin{align}
\label{eqn:mean-dislike-count}
 \EE\left[ X_u(l) \bigg\vert \left\{ \bW(l'), \hat{\bR}(l')\right\}_{l' = 1}^t \right]
 & = \sum_{i \in \hat{\mathcal{A}}} W_{ui}(l) \mathds{1} \left\{ R_{ui} < \eta \right\}	\n	\\
& \leq P^{\hat{\mathcal{A}}}_u(l),
\end{align}
where the inequality follows from the upper bound for $W_{ui}(l)$ from Lemma~\ref{lem:weights-upper-bound} and the definition of $P^{\hat{\mathcal{A}}}_u(l)$ (\eqref{eqn:sum-prob}).
 From \eqref{eqn:p}, we have $$\sum_{u=1}^n \sum_{l=1}^t \EE\left[ P^{\hat{\mathcal{A}}}_u(l)  \right] \leq p(t).$$ Note that, since the noise in the rating estimates are independent across time and users, $\left\{ P^{\hat{\mathcal{A}}}_u(l), \, l \in [t], \, u \in [n] \right\}$ are independent random variables. We now use the Chernoff bound in Lemma~\ref{lem:chernoff-bound} to obtain a probabilistic upper bound on their sum.
\begin{align*}
\PP\left[ \sum_{u=1}^n \sum_{l=1}^t P^{\hat{\mathcal{A}}}_u(l) \geq \hat{p}(t) \right]
& \leq \exp\left(-\hat{p}(t) \left(\log\left(\frac{\hat{p}(t)}{p(t)}\right) - 1 \right)- p(t)\right).
\end{align*}
By the definition of Lambert-W function,
\begin{align*}
\hat{p}(t) \left(\log\left(\frac{\hat{p}(t)}{p(t)}\right) - 1 \right) = \left( \hat{A}(t) +c \right)\log m - p(t),
\end{align*}
 which further implies that 
\begin{align}
\label{eqn:prob-ub}
\PP\left[ \sum_{u=1}^n \sum_{l=1}^t P^{\hat{\mathcal{A}}}_u(l) \geq \hat{p}(t) \right] & \leq \exp \left( -\left( \hat{A}(t)+c \right) \log m  \right). 
\end{align} 

We now proceed to obtain a probabilistic upper bound for the sum, $\sum_{u=1}^n \sum_{l=1}^t X_u(l).$ By inequality~\eqref{eqn:mean-dislike-count}, the sum of the corresponding means has the following upper bound:
\begin{align*}
\sum_{u=1}^n \sum_{l=1}^t \EE\left[ X_u(l) \bigg\vert \left\{ \bW(l'), \hat{\bR}(l')\right\}_{l' = 1}^t \right] \leq \sum_{u=1}^n \sum_{l=1}^t P^{\hat{\mathcal{A}}}_u(l).
\end{align*}
Since $\left\{ X_u(l), \, l \in [t], \, u \in [n] \right\}$ are independent Bernoulli random variables given $\left\{ \bW(l), \hat{\bR}(l), \, l \in [t] \right\},$ Lemma~\ref{lem:chernoff-bound} can be used as before. 
If $\sum_{u=1}^n \sum_{l=1}^t P^{\hat{\mathcal{A}}}_u(l) \leq \hat{p}(t),$ then  Lemma~\ref{lem:chernoff-bound} gives
\begin{align}
\PP\left[ \sum_{u=1}^n \sum_{l=1}^t X_u(l) \geq T(t) \bigg\vert \left\{ \bW(l'), \hat{\bR}(l')\right\}_{l' = 1}^t \right]
& \leq \exp \left( -\left( \hat{A}(t)+c \right) \log m  \right).
\end{align}
The above upper bound can be derived in exactly the same manner as the upper bound in \eqref{eqn:prob-ub}. Combining this upper bound with inequality~\eqref{eqn:prob-ub}, we have
\begin{align}
\PP\left[ \sum_{u=1}^n \sum_{l=1}^t X_u(l) \geq T(t) \right]	
& \leq \PP\left[ \sum_{u=1}^n \sum_{l=1}^t X_u(l) \geq T(t) \bigg\vert \sum_{u=1}^n \sum_{l=1}^t P^{\hat{\mathcal{A}}}_u(l) \leq \hat{p}(t) \right]	\n + \PP\left[ \sum_{u=1}^n \sum_{l=1}^t P^{\hat{\mathcal{A}}}_u(l) \geq \hat{p}(t) \right]	\n	\\
& \leq 2\exp \left( -\left( \hat{A}(t)+c \right) \log m  \right).	\label{eqn:prob-ub2}
\end{align}
Now, recall that $B_i(t)$ is the number of players who have rated item $i$ ineffective upto round $t,$ which can be mathematically written as $B_i(t) = \sum_{u=1}^n \sum_{l=1}^t \mathds{1}_{ui}(l) \cdot \mathds{1} \left\{ R_{ui} < \eta \right\}.$ Using definition~\eqref{eqn:dislike-count}, we have $$\sum_{i \in \hat{\mathcal{A}}} B_i(t) =  \sum_{u=1}^n \sum_{l=1}^t X_u(l),$$ which gives us the following equivalent form of inequality~\eqref{eqn:prob-ub2}:
\begin{align}
\label{eqn:prob-ub3}
 \PP\left[ \sum_{i \in \hat{\mathcal{A}}} B_i(t) \geq T(t) \right] & \leq 2\exp \left( -\left( \hat{A}(t)+c \right) \log m  \right).	
\end{align}

We can now take a union bound over all possible $\hat{\mathcal{A}}$ to bound the probability that \emph{BiAD} accepts $H_1$ in round $t.$
\begin{align*}
\PP\left[S(t) \geq T(t)\right] 
& = \PP\left[\bigcup_{\left\{\hat{\mathcal{A}} \subseteq [m]: \left\vert \hat{\mathcal{A}}\right\vert = \hat{A}(t) \right\}} \left\{\sum_{i \in \hat{\mathcal{A}}} B_i(t) \geq T(t) \right\} \right]	\\
& \leq \sum_{\left\{\hat{\mathcal{A}} \subseteq [m]: \left\vert \hat{\mathcal{A}}\right\vert = \hat{A}(t) \right\}} \PP\left[ \sum_{i \in \hat{\mathcal{A}}} B_i(t) \geq T(t) \right]	\\
& \leq 2\binom {m}{\hat{A}(t)} \exp \left( -\left( \hat{A}(t)+c \right) \log m  \right)	\\
& \leq 2\exp \left( -c\log m \right)	\\
& = 2m^{-c},
\end{align*}
where the second inequality follows from \eqref{eqn:prob-ub3}. Further taking a union bound over all rounds,
\begin{align*}
\PP[\text{Type $I$ Error}] &= \PP\left[\cup_{t=1}^{Q(m)} S(t) \geq T(t) \right]	\\
& \leq \sum_{t=1}^{Q(m)} \PP\left[S(t) \geq T(t) \right]	\\
& \leq 2\,Q(m)\,m^{-c}.
\end{align*}
 This shows that \emph{BiAD} declares an \emph{objective} recommendation engine as \emph{biased} with probability $O(\frac{Q(m)}{\sqrt{m}})$ for the choice of $c=1/2$.

\subsubsection*{Type II Error}
The algorithm makes a Type $II$ error if it does not detect an \emph{biased} recommendation engine, i.e., it declares $H_0$ when $H_1$ is true. 
Suppose that the recommendation engine is \emph{biased} with an ad-pool $\mathcal{A}$ of size $\left\vert \mathcal{A} \right\vert = A.$ Fix a $\delta \in (0,1).$ We prove that $\sum_{u=1}^n \sum_{l=1}^A \sum_{i \in \mathcal{A}} \mathds{1}_{ui}(l),$ which is equal to the number of ad recommendations to the $n$ players until round $A$ is at least $\gamma(1-\delta) nA$ with high probability. For any $1 \leq u \leq n,$  let $Y_u(l) = 1$ if the \emph{biased} recommendation engine decides to recommend from the ad-pool to user $u$ in round $t.$ Note that $\sum_{i \in \mathcal{A}} \mathds{1}_{ui}(l) \geq Y_u(l)$ and that $\left\{ Y_u(l), \, l \in [A], \, u \in [n] \right\}$ are i.i.d.\ Bernoulli random variables with mean $\gamma.$ This gives us
\begin{align}
\label{eqn:ad-conc}
\PP\left[ \sum_{u=1}^n \sum_{l=1}^A \sum_{i \in \mathcal{A}} \mathds{1}_{ui}(l) < \gamma(1-\delta) nA \right]
& \leq \PP\left[ \sum_{u=1}^n \sum_{l=1}^A Y_u(l) < \gamma(1-\delta) nA \right]		\n	\\
& \leq e^{\left( - \frac{\delta^2\gamma nA }{2}\right)},
\end{align}
where the last inequality follows from a version of Chernoff bound given in \cite{mitzenmacher-upfal05probability} for sum of i.i.d.\ Bernoulli random variables.

The detection algorithm makes the correct decision if in round $t,$ $S(t) \geq T(t)$ for some $t \leq Q(m).$ We show that $S(A) \geq T(A)$ with high probability. Since $A \leq Q(m),$ the algorithm makes the correct decision with high probability. 
Now, suppose that the number of ad recommendations to the $n$ players until round $A$ is at least $\gamma(1-\delta) nA.$ Since the total number of effective ads to the $n$ players is $o(\gamma nA),$ the total number of ineffective recommendations from the ad-pool until round $A$ is $\gamma \Omega(nA).$ Consequently,
\begin{align}
\label{eqn:ads-lb}
S(A) & = \max_{\left\{\hat{\mathcal{A}} \subseteq [m]: \left\vert \hat{\mathcal{A}}\right\vert = A \right\}} \sum_{i \in \hat{\mathcal{A}}} B_i(t)	\n	\\
& \geq \sum_{i \in \mathcal{A}} B_i(t)	\n	\\
& \geq \gamma(1-\delta) nA - o(\gamma nA)	\n	\\
& = \gamma \Omega(nA).
\end{align}
To prove that $T(A)$ does not exceed the right hand side of inequality~\eqref{eqn:ads-lb}, we consider the following cases:
\begin{enumerate}[label=(\roman{*}): ]
\item $\hat{\beta}(A) \geq e$	\\
Since $W(\cdot)$ is an increasing function in $[0, \infty),$ we have $W\left( \frac{\hat{\beta}(A)}{e} \right) \geq W(1) > \frac{1}{2}.$ Now,
\begin{align*}
T(A) & = \exp \left(1+W\left(\frac{\hat{\beta}(A)}{e}\right)\right)\hat{p}(A)	\\
& = \frac{\hat{\beta}(A)}{W\left(\frac{\hat{\beta}(A)}{e}\right)}\hat{p}(A)	\\
& \leq \frac{(A+c)\log m }{W(1)}	\\
& = \frac{\log m}{n} O(nA),
\end{align*}
where the second equality follows from the definition of the Lambert W function, and the last inequality follows by using the definition of $\hat{\beta}(A)$ given by \eqref{eqn:hatbeta}.
\item $\hat{\beta}(A) < e, \; \beta(A) \geq e$
\begin{align*}
\hat{p}(A) & = \exp \left(1+W\left(\frac{\beta(A)}{e}\right)\right)p(A)	\\
& = \frac{\beta(A)}{W\left(\frac{\beta(A)}{e}\right)}p(A)	\\
& \leq \frac{(A+c)\log m }{W(1)}.	\\
T(A) & = \exp \left(1+W\left(\frac{\hat{\beta}(A)}{e}\right)\right)\hat{p}(A)	\\
& \leq \exp \left(1+W(1)\right)\hat{p}(A)	\\
& = \frac{\log m}{n} O(nA).
\end{align*}
\item $\hat{\beta}(A) < e, \; \beta(A) < e$
\begin{align*}
T(A) & = \exp \left(1+W\left(\frac{\hat{\beta}(A)}{e}\right)\right)\hat{p}(A)	\\
& \leq \exp \left(1+W(1)\right)\hat{p}(A) 	\\
& \leq \exp \left(2+2W(1)\right)p(A)	\\
& = \frac{p(A)}{nA} O(nA).
\end{align*}
\end{enumerate}
Since $\gamma = \omega\left( \frac{\log m}{n} \right), \, \omega(\frac{p(A)}{nA}),$ combining the results from the above three cases with inequality~\eqref{eqn:ads-lb} gives that $S(A) \geq T(A)$ for $m$ large enough. Therefore, the algorithm declares the correct hypothesis in round $A$ if the number of ad recommendations to the $n$ players until round $A$ is at least $\gamma(1-\delta) nA.$ We can therefore use the concentration inequality in \eqref{eqn:ad-conc} to bound the probability of Type $II$ error.
\begin{align*}
\PP[\text{Type $II$ Error}] & \leq \PP\left[ S(A) < T(A) \right]	\\
& \leq \PP\left[ \sum_{u=1}^n \sum_{l=1}^A \sum_{i \in \mathcal{A}} \mathds{1}_{ui}(l) < \gamma(1-\delta) nA \right]	\\
& \leq \exp\left( - \frac{\delta^2}{2}\gamma nA \right)	\\
& = e^{-\Omega(\gamma n)}.
\end{align*}
This shows that the probability of Type $II$ error decays exponentially with the number of players and the bias probability.
\end{proof}

\section{Conclusion}
We propose an algorithm that can identify an \emph{biased} recommendation engine that systematically favors a few sponsored advertisements over other genuine recommendations. We formulate a probabilistic model for recommender systems and give theoretical guarantees for our detection algorithm based on this model. Specifically, we show that the probability of missed detection and false positives are low for recommender systems with large databases. We show through simulations that the algorithm performs well for many data sets and different types of recommendation algorithms. In an age when both personalization and advertising have become very prevalent, this kind of anomaly detection algorithm is relevant in a wide variety of scenarios. We demonstrate how our detection algorithm can be applied to problems such as identification of search engine bias and pharmaceutical lobbying. It would be interesting to investigate ways of deploying such an anomaly detection mechanism in practical settings.

\section*{Acknowledgments}
This work was supported by the NSF grant CNS-1320175 and the ARO grant W911NF-14-1-0387.

\bibliographystyle{ieeetran}
\bibliography{recosys}

\end{document}